\definecolor{webgreen}{rgb}{0,0.4,0}
\definecolor{webbrown}{rgb}{0.6,0,0}
\definecolor{purple}{rgb}{0.5,0,0.25}
\definecolor{darkblue}{rgb}{0,0,0.7}
\definecolor{darkred}{rgb}{0.7,0,0}
\definecolor{darkgreen}{rgb}{0,0.7,0}
\newcommand{\ignore}[1]{}
\newtheorem{prop}{{\bf Proposition}}
\newtheorem{theorem}{{\bf Theorem}}
\newtheorem{defn}{{\bf Definition}}
\newtheorem{obs}{{\bf Observation}}
\newtheorem{claim}{{\bf Claim}}
\newtheorem{fact}{{\bf Fact}}
\newtheorem{remark}{{\bf Remark}}
\newenvironment{proof}{\noindent {\bf \sl Proof\/}:\enspace}{\hfill $\blacksquare{}$ \vspace{12pt}}
\newcommand{\lord}{\succ_{{\rm LD}}}
\title{{\bf Teacher transfers: \\ equalizing deficits across schools}\thanks{For comments and suggestions, we are grateful to Tommy Andersson, Anna Bogomolnaia, Julien Combe, Bhaskar Dutta, Albin Erlanson, Manshu Khanna, Vijay Krishna, Herve Moulin, Abhiroop Mukhopadhyay, Sanket Patil, Antonio Penta, Debraj Ray, Olivier Tercieux, Utku Unver and Mike Woodford, and seminar participants at conferences and workshops where the paper was presented.}}
\author{Debasis Mishra~~~~~~Soumendu Sarkar~~~~~~Arunava Sen \\
Jay Sethuraman~~~~~~~Sonal Yadav\thanks{Mishra: Indian Statistical Institute, Delhi, \texttt{dmishra@isid.ac.in}. Sarkar: Delhi School of Economics, \texttt{soumendu@econdse.org}. Sen: Indian Statistical Institute, \texttt{asen@isid.ac.in}. Sethuraman: Columbia University, \texttt{jay@ieor.columbia.edu}. Yadav: University of Liverpool, \texttt{sonal.crk@gmail.com}}}
\begin{document}
\maketitle 
\begin{abstract}
The Right to Free and Compulsory Education Act (2009) (RTE) of the Government of India prescribes student-teacher ratios for state-run schools.
One method advocated by the Act to achieve its goals is the redeployment of teachers from surplus to deficit (in teacher strength) schools. We consider a model where teachers can either remain in their initially assigned schools or be transferred to a deficit school in their acceptable set. The planner's objective is specified in terms of the post-transfer deficit vector that can be achieved. We show that there exists a transfer whose post-transfer deficit vector Lorenz dominates all achievable post-transfer deficit vectors. We provide a two-stage algorithm to derive the Lorenz-dominant post-transfer deficit vector, and show that this algorithm is strategy-proof for teachers.
\end{abstract}

\noindent \textsc{Keywords:} matching, teacher transfer, Lorenz dominance, network flows.

\noindent \textsc{JEL Classification:} C78; D47; D71

\newpage{}

\section{Introduction}
\label{sec:introduction}
The equitable allocation of resources is a fundamental concern in economics. It is well-known that various equity criteria can be satisfied in models where resources are perfectly divisible. However, several problems are  {\sl discrete} in nature and only allow for integral solutions. Our objective in this paper is to show that an existing equitable solution for perfectly divisible resources can be extended to a discrete problem of practical interest.

The Right to Education Act in India, 2009 (henceforth, RTE) was enacted by the Union Government to improve the quality of school education
in India.\footnote{For an overview of the act, see \href{https://dsel.education.gov.in/rte}{\texttt{https://dsel.education.gov.in/rte}}.} The Act identified student-teacher ratios in schools as a key determinant of the quality of education and clearly mandated the minimum number of teachers to be 
deployed relative to student strength. For historical reasons, some
schools had teachers in excess of the desired student-teacher ratio while other schools had a deficit of teachers. In addition to fresh hiring, the Act mandated the 
transfer of teachers from surplus to deficit schools. According to Section $22$ of the RTE: 
\begin{quote}
{\sl  ``...The sanctioned strength of teachers in
a school shall be notified by the Central Government, appropriate Government or the
local authority...The Central Government, appropriate Government or the local authority,
as the case may be, shall .... redeploy teachers of schools having strength in excess of the sanctioned strength....''}
\end{quote}

The implementation of the RTE Act was largely left to the various states in the Union of India. The government of the 
state of Haryana was especially proactive in the matter of teacher transfers and its policies are enunciated in 
two documents (\citet{TTP16} and \citet{TTP23}). According to the vision statements 
of both documents, the aim of its policies are 
\begin{quote}
{\sl `` ... to ensure equitable, demand based distribution of teachers/heads to protect academic interest of students and optimise job satisfaction amongst its employees in a fair and transparent manner''}.
\end{quote}

Our paper is an attempt to design a theoretical framework to achieve the goals articulated in the vision statement.\footnote{\citet{sharan2022teachers} and \citet{agarwal2018redistributing} contain empirical analyses of teacher transfers in Haryana: see also Section \ref{sec:othrel}.} We formulate the teacher transfer problem
as a network flow problem where teachers have trichotomous preferences. We show that fairness and efficiency of outcomes can be defined in a natural and unambiguous way. Finally, we show that these ``fair outcomes'' can be achieved by a strategy-proof mechanism. Though we consider teacher transfer as our application, we note that our model and results apply more broadly to environments involving transfer of resources. For instance, civil servants assigned to states are transferred periodically; (multi-tasking) workers are transferred across departments in firms; doctors are transferred within states to meet shortages in hospitals.

We consider a set of teachers and a set of schools. Each school is either a  surplus school or a deficit school, depending on the number of teachers currently 
assigned to it and its target number of teachers. Prior to transfers, every surplus school is associated  with a strictly positive integer that is its surplus. Similarly, every deficit school is associated with a strictly positive integer that is its deficit. 
A transfer is a reassignment of teachers from surplus to deficit schools satisfying several requirements. Therefore, only the teachers assigned to surplus schools can be transferred; each such teacher has a set of acceptable deficit schools to which she can be transferred. Each teacher must either be transferred to an acceptable deficit school\footnote{In Section \ref{sec:ext}, we consider the case where a teacher from a surplus school can be transferred to an acceptable surplus school. We do not allow the transfer of teachers from deficit schools in our model.}
or remain in the surplus school to which she belongs. This is an individual rationality requirement for teachers. No surplus school can become a deficit school
nor can a deficit school become a surplus school post-transfer. The former  is an individual rationality requirement for surplus schools and the latter, a natural fairness assumption. 
Finally, we do not permit fractional transfers of teachers.  This is a natural assumption in our context
since teachers are always assigned to a single school in practice. 

Every transfer leads to a post-transfer deficit vector of deficit schools. We assume (in broad consonance with the Haryana policy documents)  that the planner's
objectives are expressed solely in terms of the post-transfer deficit vector. One can conceive of the planner having one of several objectives.
A utilitarian planner may wish to minimize 
the aggregate sum of deficits or transfer as many 
teachers as possible. An egalitarian planner 
may wish to ``minimize the worst deficits''.
A welfare maximizing planner may wish to minimize total loss for a continuous and convex loss function of the deficits. These solutions 
will not coincide in an arbitrary  problem.  Our main result, Theorem \ref{theo:main} shows that there exists an achievable post-transfer deficit vector that Lorenz dominates all other achievable post-transfer deficits. Such a transfer simultaneously reconciles the objectives of utilitarian, egalitarian and 
welfare maximizing planners. A novel feature of our result is that we demonstrate the 
existence of a Lorenz dominant solution under integrality constraints. 

We provide a two-stage algorithm to find a Lorenz dominant vector. We formulate our problem as one of finding an integer flow in a network. In the first stage of the algorithm, we consider a relaxed version of the network flow problem where flows are allowed to be non-negative real numbers. We then use an existing algorithm by \cite{megiddo1974optimal} and \cite{dutta1989concept} (henceforth, MDR), to find Lorenz dominant flows in this network. These flows are not integral.\footnote{Consider a problem in which there is one surplus school with three teachers and two deficit schools. Suppose every teacher finds both the deficit schools acceptable and, the current deficits of both the deficit schools are equal, say 3. Then, the MDR algorithm will generate flows which will transfer 1.5 teachers to each of the deficit schools, leading to the post-transfer deficit vector (1.5,1.5).} The second stage of the algorithm uses the MDR solution to find an appropriate rounding of these flows which corresponds to the Lorenz dominant deficit vector of the teacher transfer problem. This requires solving a maximum flow problem of an augmented network constructed using the original network and the solution of the MDR algorithm. We note that a Lorenz dominant transfer can be computed efficiently (see Remark \ref{rem:eff} in Section \ref{sec:rounding}).

From a methodological viewpoint, our work extends the work of \citet{dutta1989concept}. They consider convex cooperative games and their (the MDR) algorithm finds a Lorenz dominant core allocation --- core is non-empty in a convex game with elegant extreme point properties~\citep{S71}. Consider a convex cooperative game whose characteristic function is integral, which is the case for the teacher transfer problem. Our work addresses the following question: does there exist an {\sl integral} core allocation that Lorenz dominates every other integral core allocation. We answer this question in the affirmative for the specific class of convex games 
induced by the network flow problems in \citet{megiddo1974optimal}. We show that the output of the MDR algorithm can be used to do a careful rounding to get an integral vector. As long as this rounding remains in the core, it is an integral Lorenz dominant core allocation. 

We note that transfers could be made more efficient if teachers could be transferred from a surplus school to another surplus school while maintaining the restrictions that surplus schools do not become deficit schools and deficit schools do not become surplus schools. 
Consider the case in which there are three schools: two surplus schools $s$ and $s'$, each with a surplus of one; and a deficit school $d$ with a deficit of two. Suppose there is exactly one teacher $t$ currently assigned to $s$ who is willing to be transferred, but this teacher finds only $s'$ acceptable; and suppose teachers $t'_1$ and $t'_2$ are currently assigned to $s'$ but are both willing to be transferred to $d$. In our basic model, the deficit at $d$ can only be reduced to one because exactly one of the teachers assigned to school $s'$ can be transferred to $d$. However, if transfers between surplus schools is permitted, the deficit at $d$ can be reduced to zero: transfer $t$ from $s$ to $s'$; and transfer both $t'_1$ and $t'_2$ to $d$. In Section \ref{sec:ext}, which discusses extensions of our basic model, we show how our results can be extended to cover this case.

We consider another extension where teachers are not homogeneous. They can be qualified to teach different subjects
and  a teacher may be able to teach more than one subject. We show that such a model can be accomodated 
in our framework by a suitable adaptation of the underlying network. In Section \ref{sec:ext}, we show that our main result holds 
in this setting as well. 

We also consider the teacher transfer problem where the acceptable set of every teacher is private information.
The current assignment of teachers and the surplus and deficit vectors are, however, publicly observable.  We assume
that a teacher's preferences are trichotomous. Her top indifference class consists of the deficit schools she finds 
acceptable; the second indifference class is the surplus school that she currently 
belongs to; and the third class is the set of non-acceptable deficit schools. A transfer mechanism must therefore
rely on the reports of acceptable sets by the teachers. We show that the two-stage algorithm, supplemented by a consistent tie-breaking condition, is strategy-proof for teachers.

Our work is most closely related to the theoretical literature on matching with constraints and redistribution of matched resources. The key distinguishing feature of our work is that we seek an integral post transfer deficit vector that is Lorenz dominant. \citet{BM04} and \citet{bochet2012balancing} have considered Lorenz dominance as an objective in specific matching models. We outline the relationship between these papers and ours next.

\citet{BM04} consider two-sided matching problems in which both men and women have dichotomous preferences. For every agent (both a man and a woman), a {\sl random} matching generates the probability of an acceptable match. The paper constructs a random matching where this probability vector is Lorenz dominant. Like us, their proposed matching is the output of the \citet{dutta1989concept} algorithm for a convex cooperative game. There are however, several significant differences between their paper and ours. The first is that they allow for random matching while we are  interested in (integral) deterministic transfers. As a result, we have to supplement the MDR algorithm with a rounding stage to achieve a Lorenz dominant transfer. The second is that our problem is not a two-sided matching problem. In our model, teachers are employed in surplus schools which have constraints on the number of teachers they can release for transfer. As we shall show in Example $1$,  this fact plays an important role in our analysis. Consequently, a teacher is not an independent agent since her matching is tied to the surplus in the school where she is currently assigned. This is a departure from \cite{BM04}.
Finally, our objective is to equalize deficits, different from that of equalizing transfers.

\citet{bochet2012balancing} analyse a model in which a divisible, non-disposable 
homogeneous commodity has to be reallocated between agents with single-peaked preferences. Agents are either suppliers 
or demanders, and transfers between a supplier and demander are feasible only if they are linked. 
These links form a bipartite graph. The teachers in our model can be regarded as the commodity being transferred 
from surplus to deficit schools. However, teachers have preferences over deficit schools and cannot be treated as a homogenous commodity. They cannot also be treated as a divisible commodity since fractional appointments are not permitted in our model. Finally as mentioned earlier, the transfer 
of teachers is constrained by the surplus of the schools in their initial assignment. 

The paper is organized as follows. Section \ref{sec:model} presents the model and the main result. 
Section \ref{sec:network} describes the procedure to find a Lorenz dominant transfer. Section \ref{sec:SP}
discusses the strategy-proofness issue, while Section \ref{sec:ext} considers extensions of the basic model. Section \ref{sec:othrel} reviews additional related
literature. Finally, Appendix \ref{app:mdr} provides a proof of Lorenz dominance of the output of the MDR algorithm.

\section{The model}
\label{sec:model}

We consider a set of schools partitioned into two subsets: the set of deficit schools $D$ where there are fewer teachers than the 
number mandated by the RTE act and the set of surplus schools $S$ with more teachers than the mandated number. 
Each surplus school has an initial assignment of teachers. Let $T$ denote the set of all teachers assigned to surplus schools.
Only the teachers in $T$ are candidates for being transferred; teachers employed in non-surplus schools cannot be transferred and so play no further role in 
our basic model.\footnote{The RTE explicitly requires teachers to be transferred from surplus to deficit schools. See 
the quotation from the Act in the Introduction. The Haryana government also followed this practice - see Points (ix) and (x) on Page $7$ in \citet{TTP16}.}

We denote generic deficit schools, surplus schools, and teachers by $d_k$, $s_j$, and $t_i$ respectively. 
Each teacher $t_i$ belongs to a unique surplus 
school $O(t_i)\in S$. This is the initial assignment 
of $t_i$. 

Each teacher $t_i$ has a set of {\sl acceptable} deficit schools $A(t_i) \subseteq D$. As discussed earlier, the set $A(t_i)$ can be interpreted in two 
different ways. The first is in terms of the preferences of $t_i$: she is indifferent between schools in $A(t_i)$ and strictly prefers any 
school in $A(t_i)$ to her initial assignment $O(t_i)$, which in turn is strictly preferred to any other deficit school. 
The set $A(t_i)$ can also be interpreted as a compatibility requirement. For instance, $t_i$ could be a primary school teacher and $A(t_i)$ is 
the set of deficit schools with vacancies for primary school teachers. 
Similarly, one can think of reassignment of doctors across public hospitals where the compatibility is determined by a doctor's specialization. 
To avoid trivialities, we assume $A(t_i)$ is non-empty for each $t_i \in T$. 

Every school $s_j\in S$ has a surplus $\alpha_j$ which is a positive integer. Every school $d_k\in D$ 
has a deficit $\beta_k$ which is a positive integer.\footnote{In our basic model, we do not consider 
surplus schools with zero surpluses or deficit schools with zero deficits since they play no role in 
the analysis. In the extension in Section \ref{sec:ext}, we will include surplus schools with zero surpluses.}

We assume that there are $L$ deficit schools, i.e., $|D|=L$. 
The initial deficit vector is denoted by $\beta \equiv (\beta_1,\ldots,\beta_{L})$. Throughout, we will deal with $L$-dimensional (deficit) vectors of the form $x \equiv (x_1,\ldots,x_{L})$. Also for any subset $B \subseteq D$, 
define $x(B) := \sum_{d_k \in B}x_k$.

A {\bf transfer} is a map 
$\sigma:T\rightarrow D\cup S$ satisfying 
the following properties: 

\begin{enumerate}
	\item {\bf No transfer to an unacceptable deficit school:} 
    $$\sigma(t_i)\in A(t_i)\cup \{O(t_i)\}~\qquad~\forall~t_i\in T,$$ 
	
    \item {\bf No surplus school can become a deficit school:}
    $$|\{t_i: O(t_i) = s_j, \sigma(t_i)\neq s_j\}|\leq \alpha_j~\qquad~\forall~s_j\in S,$$
    
	\item {\bf No deficit school can become a surplus school:}
    $$|\{t_i\in T: \sigma(t_i)=d_k\}|\leq \beta_k~\qquad~\forall~d_k\in D.$$ 
\end{enumerate}

The first requirement is an individual rationality constraint for teachers. The second can be interpreted as an individual rationality constraint for 
surplus schools. The third is a natural fairness requirement.

In Section \ref{sec:ext}, we consider extensions in which some of these restrictions are relaxed, and explain why our main result continues to hold.

\subsection{Objectives of the planner}

Every transfer $\sigma$ generates a post-transfer 
deficit vector $\beta^{\sigma}$ where 
\begin{align*}
\beta^{\sigma}_k := \beta_k-|\{t_i\in T:\sigma(t_i)=d_k\}|~\qquad~\forall~d_k \in D. 
\end{align*}

We assume that the planner chooses a transfer by comparing post-transfer deficit vectors. In this respect, our model 
differs from \citet{CTT22} where the primary concern is to improve the welfare of teachers. We believe that our assumption is 
consistent with the provisions of the RTE act --- the planner's goal is to improve deficits while maintaining individual 
rationality for teachers and surplus schools.

A planner may have one of several objectives. A {\it utilitarian} planner wishes to minimize 
the aggregate sum of deficits or transfer as many 
teachers as possible. An {\it egalitarian} planner 
wishes to minimize the worst deficits lexicographically. To formalize this objective, it is helpful to introduce some notation and additional notions, which we turn to next.

Let $\gamma$ be an arbitrary vector in $\Re^{L}$. Let $[\gamma]$ denote the vector where the components of 
$\gamma$ are ordered from highest to lowest, i.e. $[\gamma]=(\gamma_{[1]},\ldots,\gamma_{[L]})$ 
and $\gamma_{[1]}\geq \gamma_{[2]}\ldots \geq \gamma_{[L]}$. Consider $\gamma, \gamma'\in \Re^{L}$. 
We say $\gamma$ lexicographically dominates $\gamma'$ 
if there exists an integer $r\in \{1,\ldots,L\}$ such that $\gamma_{[1]}=\gamma'_{[1]},\ldots,
\gamma_{[r]}=\gamma'_{[r]}$ and $\gamma_{[r+1]}< \gamma'_{[r+1]}$ or $\gamma_{[r]}=\gamma'_{[r]}$
for all $r\in \{1,\ldots,L\}$. An egalitarian planner would like 
to choose a transfer $\sigma$ such that $\beta^{\sigma}$
lexicographically dominates $\beta^{\sigma'}$ for all 
other $\sigma'\in \Sigma$. A utilitarian planner would like to choose a transfer that minimizes $\sum_{k} \beta_k^{\sigma}$ over all $\sigma \in \Sigma$. The set of all possible post-transfer deficit vectors is finite. It follows that the utilitarian and egalitarian solutions exist. 
In general, there is no reason to believe that these solutions 
will coincide.\footnote{\label{fn:utlex} Consider the following example. There 
are two deficit schools and the set of post-transfer deficit 
vectors is the set $\{(\beta_1,\beta_2)\in {\mathbb N}^2: 2\beta_1+\beta_2\geq 30\}$. 
The utilitarian and egalitarian solutions 
would pick $(15,0)$ and $(10,10)$ respectively.}
However, these objectives can be reconciled if we can find a transfer that generates a {\sl Lorenz dominant} deficit vector, which is 
defined next.

Consider $\gamma,\gamma'\in \Re^{L}$. We say 
$\gamma$ Lorenz dominates $\gamma'$ (denoted by $\gamma \lord \gamma'$)
if 
\begin{equation}\label{eq:major}
\sum_{i=1}^{k} \gamma_{[i]} \leq \sum_{i=1}^{k} \gamma'_{[i]}, \; \ \text{for all}\ k\in \{1,\ldots,L\}. 
\end{equation}
In the language of majorization (\cite{Marshall79,Hardy52}), $\gamma$ is weakly majorized by $\gamma'$ if (\ref{eq:major}) holds. If 
Inequality (\ref{eq:major}) for the case $k=L$ holds with equality, we say $\gamma$ is majorized by $\gamma'$. 
Our objective is to find a transfer (if it exists) such that the resulting deficit vector 
is weakly majorized by every deficit vector that can be generated by a transfer. Such a transfer is a Lorenz dominant transfer. 
\begin{defn}
A transfer $\sigma^\star$ is {\bf Lorenz dominant} if $\beta^{\sigma^\star} \lord \beta^{\sigma}$
for every other transfer $\sigma$. 
\end{defn}
There are several equivalent definitions of weak majorization --- see Chapters 2, 3, and 4 of \cite{Marshall79}.
For example, 
\begin{fact}[\citet{HLP29}]
    \label{fact:hlp}
    The vector $\gamma$ Lorenz dominates $\gamma'$ if and only if for all continuous, increasing, and convex functions $g:\Re \rightarrow \Re$, we have 
    $$\sum_{k=1}^L g(\gamma_k) \le \sum_{k=1}^L g(\gamma'_k).$$
\end{fact}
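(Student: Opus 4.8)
The plan is to route both implications through the one-parameter family of ramp functions $h_t(x) := \max\{x-t,\,0\}$, $t\in\Re$, each of which is continuous, increasing, and convex, together with the elementary identity
\[
\sum_{\ell=1}^{L}\max\{\gamma_\ell-t,\,0\}=\max_{0\le r\le L}\big(S_r-rt\big),\qquad S_r:=\sum_{i=1}^{r}\gamma_{[i]},\ \ S_0:=0,
\]
and its analogue for $\gamma'$ with partial sums $S'_r$. This identity holds because, if exactly $j$ coordinates of $\gamma$ strictly exceed $t$, the left-hand side equals $S_j-jt$, while moving $r$ away from $j$ never increases $S_r-rt$ (raising $r$ appends terms $\gamma_{[r]}-t\le 0$, lowering $r$ drops terms $\gamma_{[r]}-t\ge 0$). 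It is the bridge that converts the partial-sum (Lorenz) conditions into statements about sums of convex transforms, and it handles ties and the fact that $\gamma,\gamma'$ need not share a common sorted support automatically.

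First I would treat the ``only if'' direction, assuming $\gamma\lord\gamma'$. Let $I=[m,M]$ be a compact interval containing all $2L$ coordinates, and use the integral representation of a continuous increasing convex function on $I$,
\[
g(x)=g(m)+g'(m^{+})\,(x-m)+\int_{m}^{M}\max\{x-t,\,0\}\,d\mu(t),
\]
where $\mu\ge 0$ is the non-negative (second-derivative) measure associated with $g$ and $g'(m^{+})\ge 0$ because $g$ is increasing. Summing over coordinates and subtracting, the constant terms cancel; the linear term contributes $g'(m^{+})\,(S_L-S'_L)\le 0$, using $S_L\le S'_L$ (the case $k=L$) together with $g'(m^{+})\ge 0$; and the integral term is non-positive because, by the identity above together with $S_r\le S'_r$, we have $\sum_\ell h_t(\gamma_\ell)=\max_r(S_r-rt)\le\max_r(S'_r-rt)=\sum_\ell h_t(\gamma'_\ell)$ for every $t$, while $\mu\ge 0$. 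Adding the three non-positive pieces yields $\sum_k g(\gamma_k)\le\sum_k g(\gamma'_k)$.

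For the converse I would specialize: fixing $t$ and taking $g=h_t$ (an admissible test function) gives $\sum_\ell h_t(\gamma_\ell)\le\sum_\ell h_t(\gamma'_\ell)$ for every $t$. Evaluating at $t=\gamma'_{[k]}$, the identity gives $\sum_\ell h_t(\gamma_\ell)=\max_r(S_r-rt)\ge S_k-kt$ (take $r=k$) and $\sum_\ell h_t(\gamma'_\ell)=\max_r(S'_r-rt)=S'_k-kt$ (the maximum on the right is attained at $r=k$, even under ties at $\gamma'_{[k]}$), so the specialized hypothesis rearranges to $S_k\le S'_k$. As this holds for every $k\in\{1,\ldots,L\}$, we conclude $\gamma\lord\gamma'$.

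I expect the main obstacle to be the ``only if'' direction for a general, possibly non-differentiable, convex $g$: one must justify the integral representation with a non-negative second-derivative measure. I would handle this either by Lebesgue--Stieltjes integration of the monotone right-derivative of $g$, or, more elementarily, by first proving the inequality for piecewise-linear increasing convex functions --- exactly the finite non-negative combinations $a+bx+\sum_j c_j\,h_{t_j}(x)$ with $b,c_j\ge 0$ --- and then passing to the limit by uniformly approximating $g$ on the compact interval $I$ and invoking continuity. A second point worth flagging is that monotonicity is precisely what makes the linear term carry the correct sign when only the weak condition $S_L\le S'_L$ is available rather than equality; under exact majorization (equality at $k=L$) the conclusion would hold for all convex $g$, and the restriction to increasing $g$ is exactly the price of weak majorization.
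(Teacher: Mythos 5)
Your proof is correct, but there is no proof in the paper to compare it against: the paper states this equivalence as Fact \ref{fact:hlp} with a citation to \citet{HLP29} and a pointer to \citet{Marshall79}, and never proves it. The only fragment the paper does establish is Claim \ref{cl:hlp} in Appendix \ref{app:mdr} (the constant vector of averages Lorenz dominates the original vector), proved there by a direct partial-sum comparison; your machinery subsumes that claim as a special case, since the averaged vector has the same total and weakly smaller top-$k$ sums. Your route is the classical Hardy--Littlewood--P\'olya/Tomi\'c--Weyl argument: decompose an increasing convex $g$ into a constant, a nonnegative linear part, and a nonnegative mixture of ramps $h_t(x)=\max\{x-t,0\}$, and use the identity $\sum_{\ell}h_t(\gamma_\ell)=\max_{0\le r\le L}(S_r-rt)$ to translate between partial-sum conditions and convex-transform inequalities in both directions. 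The delicate points are handled correctly: in the converse, your tie argument at $t=\gamma'_{[k]}$ is sound, because every coordinate $\gamma'_{[i]}$ with $j<i\le k$ (where $j$ counts coordinates strictly above $t$) equals $t$, so $S'_k-kt=S'_j-jt$ and the maximum is indeed attained at $r=k$; and you correctly isolate monotonicity of $g$ as the hypothesis that absorbs the missing equality at $k=L$, i.e.\ the gap between weak and exact majorization, via the sign of the linear term $g'(m^{+})(S_L-S'_L)$. Two minor remarks: (i) $h_t$ is only weakly increasing, so if ``increasing'' is read strictly you should test with $h_t(x)+\epsilon x$ and let $\epsilon\to 0$, which costs nothing; (ii) the integral representation of a possibly non-differentiable convex $g$ on a compact interval is the standard Lebesgue--Stieltjes fact you cite, and your piecewise-linear approximation fallback is an equally acceptable, more elementary substitute. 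In short: a complete and correct self-contained proof of a result the paper outsources to the literature.
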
 
One can interpret $g$ to be the social planner's ``cost" of deficits.

Applying Fact \ref{fact:hlp}, a transfer $\sigma^\star$ is Lorenz dominant if for every other transfer $\sigma$ and every continuous, increasing, and convex function $g: \Re \rightarrow \Re$, we have 
\begin{align*}
    \sum_{d_k \in D} g(\beta^{\sigma^\star}_k) \le \sum_{d_k \in D} g(\beta^{\sigma}_k)
\end{align*}

The order $\lord$ is a partial order, which implies 
that a Lorenz dominant transfer need not exist (as in the 
example in Footnote \ref{fn:utlex}). If a Lorenz dominant transfer exists, 
it simultaneously reconciles the objectives of utilitarian, egalitarian and 
welfare maximizing planners --- reflected in an appropriate choice of the $g$ function above. 
\citet{BM04} and \citet{bochet2012balancing} prove the existence of a Lorenz dominant solution
in some matching problems that allow continuous or fractional solutions.  
However, ours is a discrete problem: we seek a transfer that generates an integer deficit vector which Lorenz dominates 
every other feasible integer deficit vector. Our main result is that such a transfer exists. 

\begin{theorem}
    \label{theo:main}
    A Lorenz dominant transfer exists.
\end{theorem}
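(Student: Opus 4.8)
The plan is to translate the problem into integer network flows, solve a continuous relaxation with the MDR algorithm, and then round the fractional optimum to an integer one while preserving Lorenz dominance. First I would build a network with a source $\ast$, a node for each surplus school $s_j$, a node for each teacher $t_i$, a node for each deficit school $d_k$, and a sink $\bullet$: an edge $\ast\to s_j$ of capacity $\alpha_j$, a unit-capacity edge $s_j\to t_i$ whenever $O(t_i)=s_j$, a unit-capacity edge $t_i\to d_k$ whenever $d_k\in A(t_i)$, and an edge $d_k\to\bullet$ of capacity $\beta_k$. Integer flows in this network are exactly transfers: the flow $f_k$ entering $d_k$ is the number of teachers sent to $d_k$, so the post-transfer deficit is $\beta_k-f_k$. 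A Lorenz-dominant transfer therefore exists iff, among all integer vectors $f$ realizable as inflows, there is one whose deficit vector $\beta-f$ Lorenz-dominates every other. With integer capacities the realizable inflow vectors form the integral polymatroid $P=\{f\in\Re^L_{\ge 0}:f(B)\le\rho(B)\ \text{for all}\ B\subseteq D\}$, where $\rho(B)$ is the maximum flow routable into $B$; here $\rho$ is monotone, submodular, and integer-valued.

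Next I would minimize $\sum_{d_k\in D}g(\beta_k-f_k)$ over the \emph{real} polymatroid $P$ for a strictly convex increasing $g$. This is separable convex minimization over a polymatroid, the setting of MDR, which returns a single $f^\star\in P$ that is optimal for every continuous increasing convex $g$ at once; by Fact \ref{fact:hlp}, $\beta-f^\star\lord\beta-f$ for all $f\in P$, and hence for every integer $f$. The structure I would extract is the MDR level-set decomposition: $D$ splits into ordered blocks $D_1,\dots,D_m$ on which the deficit $\beta_k-f^\star_k$ is a constant $c_\ell$, with $c_1>\cdots>c_m$ and every prefix constraint tight, $f^\star(D_1\cup\cdots\cup D_\ell)=\rho(D_1\cup\cdots\cup D_\ell)\in\mathbb{Z}$. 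In particular each block sum $(\beta-f^\star)(D_\ell)$ is an integer.

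The crux is rounding $f^\star$ to an integer $\hat f\in P$. Within each block I would redistribute so that the deficits take only the two values $\lceil c_\ell\rceil$ and $\lfloor c_\ell\rfloor$ while keeping the integer block sum fixed; equivalently $\hat f_k\in\{\lfloor f^\star_k\rfloor,\lceil f^\star_k\rceil\}$. To guarantee that such a balanced profile is realized by a genuine integer flow, I would solve a maximum-flow problem on the network augmented with the lower and upper sink capacities $\lfloor f^\star_k\rfloor$ and $\lceil f^\star_k\rceil$: the fractional flow $f^\star$ certifies feasibility of the bounded problem, and integrality of max-flow (equivalently, integrality of $P$ intersected with this box) yields the desired integer $\hat f$.

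I expect the feasibility of this exact balanced profile to be the main obstacle, since one must simultaneously respect the prefix tightness and the per-block rounding, and the augmented max-flow is what makes it go through. Proving that $\hat f$ is then Lorenz-dominant over all integer $f$ is surprisingly clean via the threshold form of weak majorization: writing $\hat w=\beta-\hat f$ and $v=\beta-f$, one has $\hat w\lord v$ iff $\sum_k(\hat w_k-\theta)_+\le\sum_k(v_k-\theta)_+$ for all $\theta$, and it suffices to check integer $\theta$ (both sides are affine in $\theta$ between consecutive integers). For integer $\theta$ the map $x\mapsto(x-\theta)_+$ is affine on $[\lfloor c_\ell\rfloor,\lceil c_\ell\rceil]$, so replacing the block value $c_\ell$ by the balanced integer values leaves $\sum_k(\hat w_k-\theta)_+=\sum_k(\beta_k-f^\star_k-\theta)_+$ unchanged; combined with the fractional dominance $\sum_k(\beta_k-f^\star_k-\theta)_+\le\sum_k(\beta_k-f_k-\theta)_+$ (itself Fact \ref{fact:hlp} applied to the continuous increasing convex function $(\cdot-\theta)_+$), this gives $\sum_k(\hat w_k-\theta)_+\le\sum_k(v_k-\theta)_+$, and hence $\beta-\hat f\lord\beta-f$ for every integer $f$, as required.
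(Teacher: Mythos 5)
Your overall architecture is the same as the paper's (network flow formulation, Megiddo's polymatroid characterization of achievable inflow vectors, MDR on the continuous relaxation, rounding via an augmented max-flow), and your final step is actually a nice simplification: the threshold form of weak majorization, checked only at integer $\theta$, replaces the paper's single-crossing case analysis (its Cases A and B in Proposition \ref{prop:intld}) and even avoids the paper's step of merging blocks with identical floor/ceiling pairs. But that argument, as you yourself note, needs \emph{two} properties of the rounded flow: the per-sink box condition $\hat f_k\in\{\lfloor f^\star_k\rfloor,\lceil f^\star_k\rceil\}$ \emph{and} preservation of every block sum $\hat f(D_\ell)=f^\star(D_\ell)$. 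Here is the gap: these are not ``equivalent,'' as you claim, and your augmented max-flow --- which imposes only the per-sink lower and upper bounds and then takes any integral maximum flow --- does not deliver block-sum preservation. Maximality pins down only the total $\hat f(D)=f^\star(D)$, and the polymatroid constraints give only $\hat f(\overline D_\ell)\le \rho(\overline D_\ell)=f^\star(\overline D_\ell)$; a maximum flow can still drain a tight prefix by one or more units and compensate in a later block through the ceiling slack of fractional entries there.

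Concretely: one surplus school $s$ with $\alpha=4$; teachers $t_0,t_1$ with $A=\{d_1,d_2\}$, $t_2$ with $A=\{d_1,d_2,d_3,d_4\}$, $t_3$ with $A=\{d_3,d_4\}$; deficits $\beta=(4,4,2,2)$. Then $v(\{d_1,d_2\})=3$, $v(\{d_3,d_4\})=2$, $v(D)=4$, and the MDR algorithm yields $D_1=\{d_1,d_2\}$ with $c_1=2.5$ (so $f^\star_1=f^\star_2=1.5$) and $D_2=\{d_3,d_4\}$ with $c_2=1.5$ (so $f^\star_3=f^\star_4=0.5$), with both prefix constraints tight. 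Your boxes are $[1,2]$ on $d_1,d_2$ and $[0,1]$ on $d_3,d_4$. The integer flow $t_0\to d_1$, $t_1\to d_2$, $t_2\to d_3$, $t_3\to d_4$, i.e.\ $\hat f=(1,1,1,1)$, respects every box and has total value $4$, hence is an integral maximum flow of your bounded network --- yet its block sums are $(2,2)$ rather than $(3,1)$, and its deficit vector $(3,3,1,1)$ is \emph{not} Lorenz dominant: the integrally achievable vector $(3,2,1,2)$ (from $f=(1,2,1,0)$, i.e.\ $t_0\to d_1$, $t_1,t_2\to d_2$, $t_3\to d_3$) has top-two sum $5<6$. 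So your construction can output a non-dominant transfer, and the dominance proof breaks exactly at ``keeping the integer block sum fixed.'' The missing device is the one the paper builds into its auxiliary network $G^{\rm AU}$: an aggregator node for each block $D_j$ with exact integral capacity $\sum_{d_i\in D_j}(\beta_i-h^\star_i)$, whose saturation by any maximum flow forces $\hat f(D_j)=f^\star(D_j)$ for every $j$; with that added (integrality of bounded network flows still applies, and $f^\star$ still certifies feasibility), the rest of your argument goes through.
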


The proof of Theorem \ref{theo:main} constructs a Lorenz dominant transfer using a two-stage algorithm based on a network flow formulation of the problem, which we describe next.

\section{A procedure to find the Lorenz dominant transfer}
\label{sec:network}

In this section, we describe a procedure to find a Lorenz dominant transfer to complete the proof of Theorem \ref{theo:main}. 
This is done in three steps, described in next three subsections. In the first step, we formulate the teacher 
transfer problem as a network flow problem. In the second, we use an existing result of \cite{megiddo1974optimal}, and connect 
the {\sl reduced-form} version of the network flow problem to the core constraints of a convex cooperative game. We then apply 
an algorithm due to \cite{megiddo1974optimal} and \cite{dutta1989concept} (MDR algorithm) to find a Lorenz dominant deficit vector and corresponding 
transfers. The output of the MDR algorithm typically generates fractional transfers. The final step of our procedure consists 
of a careful rounding of the output of the MDR algorithm to obtain a Lorenz dominant transfer. This step involves the use of a standard algorithm (like Ford-Fulkerson) 
to find a maximum flow in an augmented network that is obtained  by modifying the network  of the first step using the output of the MDR algorithm. 

\subsection{A network flow formulation}

The teacher transfer problem is formulated as a {\sl single source multiple sink 
network flow problem}.\footnote{For a comprehensive survey of network flows see \citet{ahuja1988network}.}
We consider two versions of the problem, one where flows are restricted to be non-negative
integers and the other where flows are allowed to be non-negative real numbers.

Both versions use the same network structure that we describe below. 
\begin{itemize}
    \item The set of nodes $N$ in the graph consists of:
    \begin{enumerate}
        \item a source node $N_0$
        \item a node for each surplus school in $S$
        \item a node for each teacher in $T$
        \item a node for each deficit school in $D$: we refer to these nodes as {\sl sink} nodes
    \end{enumerate}
    Thus, $N \equiv \{N_0\}\cup S\cup T\cup D$. 

    \item The set of edges $E$ and their capacities are as follows: 
    \begin{enumerate}
        \item a directed edge $(N_0,s_j)$ for every surplus school $s_j \in S$, with capacity $\alpha_j$
        \item a directed edge $(O(t_i),t_i)$ for every teacher $t_i \in T$, with capacity $1$
        \item a directed edge $(t_i,d_k)$ for every teacher $t_i \in T$ and every deficit school $d_k \in A(t_i)$, with capacity $1$
    \end{enumerate}
    \item Each deficit school (sink) node $d_k \in D$ has a capacity $\beta_k$.
    \footnote{Note that we can convert node capacities to edge capacities by creating a copy of each sink node and adding a directed edge of capacity $\beta_k$ from sink node $d_k$ to its copy.}
\end{itemize}
We let $G$ denote this network.

A flow is a map $f: E \rightarrow \Re_+$ satisfying {\sl flow balancing}:

\begin{enumerate}
    \item $f(N_0,s_j)=\sum \limits_{t_i: s_j = O(t_i)} f(s_j,t_i)$ for each surplus school $s_j\in S$,
	
	\item $f(O(t_i),t_i)=\sum \limits_{d_k\in A(t_i)} f(t_i,d_k)$ for each teacher $t_i\in T$,
\end{enumerate}
and {\sl capacity} constraints:
\begin{enumerate}
    \item $f(N_0,s_j) \le \alpha_j$ for all $s_j\in S$,

    \item $f(s_j,t_i) \le 1$ for all edges $(s_j,t_i)$ such that $s_j=O(t_i)$,

    \item $f(t_i,d_k) \le 1$ for all edges $(t_i,d_k)$ such that $d_k \in A(t_i)$,
	
	\item $\sum \limits_{t_i\in T} f(t_i,d_k) \le \beta_k$ for all $d_k\in D$.
\end{enumerate}

We say a flow is an {\sl integer flow} if the flow on every edge of the network is a non-negative integer.

For every integer flow $f$ in $G$, we
define a corresponding transfer $\sigma$ as follows:
\begin{align*}
\sigma(t_i) & =
\begin{cases}
d_k, & \mbox{if }\  f(t_i,d_k)=1\\
O(t_i), & \mbox{if  }\   f(t_i,d_k)=0\ \mbox{for all}\ d_k\in A(t_i)
\end{cases}
\end{align*}

It is easy to verify that $\sigma$ satisfies the feasibility properties in the definition of a transfer. Similarly,
given a transfer $\sigma$, we define 
a corresponding {\bf integer} flow $f$ as follows: For the edge
$(t_i,d_k)$, let
\begin{align*}
f(t_i,d_k) & =
\begin{cases}
1, & \mbox{if }\  \sigma(t_i)=d_k\\
0, & \mbox{if  }\  \sigma(t_i)=O(t_i)
\end{cases}
\end{align*}
For the edge $(O(t_i),t_i)$, let
\begin{align*}
f(O(t_i),t_i) & =
\begin{cases}
1, & \mbox{if }\  \sigma(t_i)=d_k\\
0, & \mbox{if  }\  \sigma(t_i)=O(t_i)
\end{cases}
\end{align*}
For the edge $(N_0,s_j)$, let
\begin{equation*}
f(N_0,s_j)=|\{t_i: s_j=O(t_i), \sigma(t_i)\neq s_j\}|
\end{equation*}
Once again it is easy to verify that $f$ is integer-valued and satisfies flow balancing and capacity constraints. It follows that there is a bijection between transfers and the set of integer flows in $G$. For a flow $f$ in $G$, we denote the flow into a sink node (deficit school) $d_k$ by $f(d_k)$.

A flow $f$ is a {\bf maximum flow} if for every other flow $f'$ we have 
\begin{align*}
    \sum_{d_k \in D} f(d_k) &\ge \sum_{d_k \in D} f'(d_k)
\end{align*}
It is well-known that a maximum integer flow exists whenever the capacities of nodes and edges are integral. Consequently, there is a maximum flow that corresponds to a transfer. By Lorenz dominance condition (\ref{eq:major}), the following observation is immediate. 
\begin{obs}
    \label{obs:maxflow}
    A Lorenz dominant transfer (if it exists) corresponds to an integral maximum flow.
\end{obs}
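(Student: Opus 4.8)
The plan is to combine the bijection between transfers and integer flows in $G$ with the $k=L$ instance of the Lorenz dominance condition (\ref{eq:major}). First I would record the identity linking deficits to flows: if $f_{\sigma}$ denotes the integer flow corresponding to a transfer $\sigma$, then $\beta^{\sigma}_k = \beta_k - f_{\sigma}(d_k)$ for every $d_k \in D$, so summing over $D$ gives
\begin{equation*}
\sum_{d_k \in D} \beta^{\sigma}_k = \beta(D) - \sum_{d_k \in D} f_{\sigma}(d_k).
\end{equation*}
Since $\beta(D)$ is independent of $\sigma$, minimizing the total post-transfer deficit is equivalent to maximizing the total flow $\sum_{d_k \in D} f_{\sigma}(d_k)$ among integer flows.

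Next I would specialize the Lorenz dominance condition to $k = L$. Assuming a Lorenz dominant transfer $\sigma^{\star}$ exists, we have $\beta^{\sigma^{\star}} \lord \beta^{\sigma}$ for every transfer $\sigma$. For $k = L$ the ordered partial sums appearing in (\ref{eq:major}) are simply the full coordinate sums, and hence $\sum_{d_k \in D} \beta^{\sigma^{\star}}_k \le \sum_{d_k \in D} \beta^{\sigma}_k$ for all $\sigma$. By the identity above, the corresponding flow $f_{\sigma^{\star}}$ maximizes $\sum_{d_k \in D} f(d_k)$ over all integer flows, and $f_{\sigma^{\star}}$ is integral by construction.

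Finally I would upgrade ``maximal among integer flows'' to ``maximum flow'' in the sense defined in the text, namely maximal among all (possibly fractional) flows. This is where the only real subtlety lies: a priori a fractional flow could carry strictly more total flow than any integer flow, so the two maximization problems need not have the same optimal value. The bridge is the integrality property of maximum flows noted just before the statement: since all edge and node capacities are integral, the maximum of $\sum_{d_k \in D} f(d_k)$ over all flows is attained by some integer flow, so the integer and real maxima coincide. Consequently $f_{\sigma^{\star}}$, being maximal among integer flows, attains the global maximum and is therefore a maximum flow; as it is integral, $\sigma^{\star}$ corresponds to an integral maximum flow, which is the desired conclusion. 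The argument is short overall; the single point to handle with care is precisely this integer-versus-real maximization, and it is resolved by the integrality of maximum flows.
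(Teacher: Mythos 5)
Your proposal is correct and matches the paper's reasoning: the paper states the observation as immediate from the $k=L$ case of the Lorenz dominance condition (\ref{eq:major}) together with the remark, made just beforehand, that integral capacities guarantee an integer maximum flow corresponding to a transfer. You have simply written out those same steps --- the deficit-flow identity, the $k=L$ specialization, and the integrality theorem bridging integer and fractional maxima --- in full detail, so there is no gap and no genuinely different route.
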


\subsection{An illustrative example}
We present an example to highlight some important features of our model.

There are two surplus schools $\{s_1,s_2\}$ each with a surplus of one teacher. 
Teacher $t_1$ is initially assigned to surplus school $s_1$; teachers $t_2$ and $t_3$ are initially assigned to surplus school $s_2$.
Surplus school $s_2$ can transfer at most transfer one teacher. There are three deficit schools $\{d_1,d_2,d_3\}$ with deficits $\beta_1=1, \beta_2=3, \beta_3=2$. Teachers $t_1$ and $t_2$ find $d_1$ and $d_2$ acceptable, while teacher $t_3$ finds all three deficit schools acceptable. The corresponding network is shown in Figure \ref{fig:second}.
\begin{figure}
    \centering
    \includegraphics[width=0.5\linewidth]{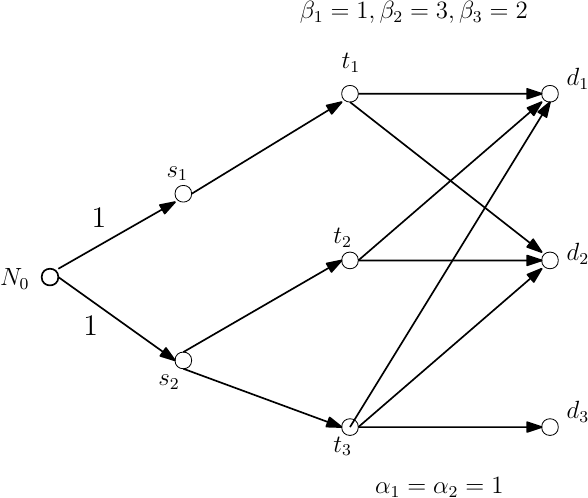}
    \caption{Example $1$}
    \label{fig:second}
\end{figure}

As we will see later, the {\sl fractional} Lorenz dominant solution is the deficit vector $(1,\frac{3}{2},\frac{3}{2})$. This corresponds to a transfer of no teachers to school $d_1$, $\frac{3}{2}$ teachers to school $d_2$ and $\frac{1}{2}$ teacher to school $d_3$. This transfer equalizes the deficits of schools $d_2$ and $d_3$ without changing the deficit of school $d_1$.

Our procedure will result in an integer deficit vector $(1,1,2)$ with $d_1$ and $d_2$ having a (post-transfer) deficit of $1$, while the deficit of $d_3$ remains unchanged at $2$. This can be achieved by transferring teachers $t_1$, and either $t_2$ or $t_3$ (but not both), to school $d_2$; the remaining teacher is not transferred out of school $s_2$ in order to respect the surplus constraint of $s_2$.

The vector $(1,1,2)$ Lorenz-dominates all other possible integer post-transfer deficit vectors. 

We wish to emphasize some features of our model:
\begin{itemize}
    \item {\sl Surplus constraints are important.} If we ignored surplus constraints in Example  \ref{fig:second},  all the three teachers could be transferred. This would reduce deficits further, but would also turn a surplus school into a deficit school, violating a natural participation constraint for the surplus school. In our model the set of teachers is partitioned according to the schools they are currently assigned to.  There is an upper bound on the number of teachers that can be transferred from each element of the partition. This constraint distinguishes our model from standard matching problems. 

    \item {\sl Role of existing deficits.} In Example $1$, the Lorenz dominant solution does not transfer any teacher to $d_1$ since its existing deficit is the lowest amongst the three deficit schools. On the other hand, if the deficit in $d_1$ were high, say $5$, then two teachers would be transferred to $d_1$ to equalize deficits. 

    \item {\sl Many transfer-maximizing solutions.} There are many transfers that maximize the total transfer to deficit schools. For instance, transferring $t_1$ to $d_1$ and transferring $t_2$ to $d_2$ makes the maximal ($2$) number of transfers resulting in the deficit vector $(0,2,2)$ but that is clearly Lorenz dominated by our solution.

\end{itemize}

\subsection{The reduced form and a cooperative game}

In this subsection, we present a result in \citet{megiddo1974optimal}  that allows us to restrict attention {\it solely} to flows into deficit schools. According to the result, a flow to deficit schools can arise from a feasible flow in network $G$ if and only if it satisfies some conditions. These conditions appear in the form of the core constraints of a suitably constructed cooperative game. This reformulation of the problem in {\sl reduced-form} simplifies our analysis.

In order to describe the reduced-form, we introduce some more concepts. 
For every $B \subseteq D$, a flow $f$ is a {\bf $B$-maximum flow} if for every other flow $f'$ we have 
\begin{align*}
    \sum_{d_k \in B}f(d_k) &\ge \sum_{d_k \in B}f'(d_k)
\end{align*}
Let $v(B):= \sum_{d_k \in B}f(d_k)$, i.e., the maximum flow into sinks in $B$. An integer $B$-maximum flow exists, and hence, $v(B)$ is a non-negative integer for every $B \subseteq D$. 

Define $w:2^D\rightarrow \Re$ as follows:
\begin{align*}
w(B):=\beta(B)-v(B)~\qquad~\forall~B \subseteq D. 
\end{align*}

The pair $\langle D,w\rangle$ is a cooperative 
game, where $D$ is the set of players and $w$ is
an integer-valued characteristic function. Since 
$D$ will be held fixed throughout the analysis, 
we will refer to this game simply by $w$. 

Let $h=(h_1,\ldots,h_L)$ be an $L$-dimensional vector of real numbers. 
The vector $h$ is {\bf achievable} if there is a flow $f$ in $G$ such that $h_k = \beta_k - f(d_k)$ for all $d_k \in D$. In other words, $h$ is the post-transfer vector generated by some feasible flow. If $h$ is an integral vector, it must be generated by an integral flow vector, i.e., a transfer. 
\begin{prop}[\cite{megiddo1974optimal}]
    \label{prop:megiddo}
    A vector $h \equiv (h_1,\ldots,h_L)$ is achievable if and only if 
    \begin{align}\label{eq:core}
        h(B) &\ge w(B)~\qquad~\forall~B \subseteq D
    \end{align}
    
    Further, $w$ is convex (supermodular) and integer-valued, i.e., for any $A \subsetneq B \subseteq D$ and $d_k \notin B$,
    \begin{align}
        w(B \cup \{d_k\}) - w(B) \ge w(A \cup \{d_k\}) - w(A)
    \end{align}
\end{prop}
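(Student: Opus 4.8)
\textbf{Proof proposal for Proposition \ref{prop:megiddo}.}

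The statement has two parts: the characterization of achievable vectors via the inequalities $h(B) \ge w(B)$, and the convexity of the characteristic function $w$. The plan is to treat these separately, relying throughout on the max-flow min-cut theorem for the single-source network $G$.

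For the characterization, I would first prove the ``only if'' direction, which is the easy half. Suppose $h$ is achievable via a flow $f$, so $h_k = \beta_k - f(d_k)$. Fix $B \subseteq D$. By definition $v(B)$ is the maximum flow into $B$ over \emph{all} feasible flows, so in particular $\sum_{d_k \in B} f(d_k) \le v(B)$. Summing the defining relation over $B$ gives $h(B) = \beta(B) - \sum_{d_k \in B} f(d_k) \ge \beta(B) - v(B) = w(B)$, which is exactly inequality (\ref{eq:core}). The ``if'' direction is the substantive one. Here I would start from a vector $h$ satisfying $h(B) \ge w(B)$ for all $B$, set $g_k := \beta_k - h_k$ (the desired flow into each sink), and show that a feasible flow realizing these sink inflows exists. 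The natural tool is a feasibility theorem for flows with lower bounds, or equivalently a direct min-cut argument: one augments $G$ by pinning the inflow at each sink $d_k$ to be at least $g_k$ (demands) while respecting the original capacities, and then invokes a Hoffman-type / max-flow feasibility criterion. The condition $h(B) \ge w(B)$, rewritten as $g(B) \le v(B)$ for every $B$, is precisely the statement that the demands never exceed what the network can route into any coalition $B$; translating this into the non-existence of a violated cut in the augmented network is what yields feasibility. I expect this translation — verifying that the coalition inequalities coincide exactly with the min-cut conditions of the augmented network — to be the main obstacle, since it requires correctly accounting for both the source-side capacities $\alpha_j$ and the unit capacities on teacher edges when describing an arbitrary cut.

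For the convexity (supermodularity) of $w$, I would exploit the fact that $v(B)$ is a \emph{polymatroid rank function} arising from a flow network. Since $w(B) = \beta(B) - v(B)$ and $\beta(B)$ is additive (hence modular), supermodularity of $w$ is equivalent to submodularity of $v$: for all $A \subseteq B$ and $d_k \notin B$, I must show $v(B \cup \{d_k\}) - v(B) \le v(A \cup \{d_k\}) - v(A)$, i.e. the marginal maximum flow routable into $d_k$ is nonincreasing as the coalition grows. This is the classical submodularity of the maximum-flow (matching-type) rank function: enlarging the set of sinks that already compete for the source's supply can only weaken the marginal gain from admitting one more sink. I would establish it either by the standard uncrossing argument on minimum cuts, or by appealing directly to the known result that the maximum flow into a subset of sinks from a common source defines a submodular function. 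Integer-valuedness of $w$ is immediate from the integrality of $\beta$ together with the integral max-flow theorem, which guarantees $v(B)$ is a nonnegative integer (as already noted in the text). Assembling these gives both claims of the Proposition.
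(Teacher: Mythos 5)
Your proposal is correct, but it re-derives from first principles what the paper simply imports by citation. The paper's own proof of Proposition \ref{prop:megiddo} is two lines: apply Lemma 4.1 of \cite{megiddo1974optimal} to the nonnegative vector $x \equiv \beta - h$ (your $g$), which says $x$ is generated by a feasible flow if and only if $x(B) \le v(B)$ for all $B \subseteq D$, yielding (\ref{eq:core}) after the substitution; and apply Lemma 3.2 of \cite{megiddo1974optimal}, which says $v$ is concave (submodular), so that $w = \beta - v$ is convex because $\beta$ is modular. Your ``only if'' computation and your substitution coincide exactly with this; your Hoffman/Gale feasibility argument with sink demands --- including the key observation that, because demands sit only on sink edges, the cut conditions collapse to $g(B) \le v(B)$ with the min cut separating $N_0$ from $B$ having capacity $v(B)$ --- is in essence a proof of Megiddo's Lemma 4.1, and your uncrossing argument for submodularity of $v$ is a proof of his Lemma 3.2. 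So the routes differ only in that yours is self-contained (useful if one wants the paper not to lean on an external source, and it makes visible \emph{why} the coalition inequalities are exactly the min-cut conditions), while the paper's buys brevity by outsourcing both facts. Two small points to tighten if you execute your plan: first, pinning inflows ``at least $g_k$'' does not immediately give achievability, which requires $f(d_k) = g_k$ exactly; either impose lower bound equal to upper bound $g_k$ on the sink edges, or note that any flow with $f(d_k) \ge g_k$ can be pared down to equality by reducing flow along source-to-$d_k$ paths in the path decomposition. Second, the ``if'' direction needs the side condition $g = \beta - h \ge 0$, i.e.\ $h_k \le \beta_k$ componentwise (otherwise, e.g., $h_k > \beta_k$ can satisfy (\ref{eq:core}) yet be unachievable since $f(d_k) \ge 0$); this restriction is implicit in the paper as well, since Megiddo's Lemma 4.1 is stated for nonnegative $x$, so it is a shared gloss rather than a gap specific to your argument.
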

\begin{proof}
    The proof follows from a straightforward application of results in \cite{megiddo1974optimal}. 
     Let $x \equiv (x_1,\ldots,x_L)$ be any non-negative vector. Lemma 4.1 in \cite{megiddo1974optimal} shows that $x$ can be generated from a flow if and only if $x(B) \le v(B)$ for all $B \subseteq D$. This immediately implies (\ref{eq:core}).
    
    Lemma 3.2 in \cite{megiddo1974optimal} shows that $v$ is concave. Since $\beta$ is linear and $-v$ is convex, it follows that $w$ is convex.
\end{proof}

As a consequence of Proposition \ref{prop:megiddo}, we can restrict attention to a reduced form problem of analyzing post-transfer deficit vectors. Moreover, achievability consists of verifying a family of subset inequalities. Since $w$ is convex, these inequalities closely resemble the core constraints of a convex game, 
with one important difference: equality for $B=D$ in (\ref{eq:core}) is not imposed.

We refer to the constraints in (\ref{eq:core}) as {\bf relaxed core} constraints.\footnote{In discrete optimization, these constraints are variants of the polymatroid constraints.}

We can summarize the results thus far, as 
follows. The teacher transfer problem can be reformulated as 
a network flow problem where integer flows correspond to transfers. 
A convex cooperative game $w$ is induced by the
network flow problem, where deficit schools are the players. 
For every coalition $B\subseteq D$, $w(B)$ is 
an integer. In addition, a real 
deficit vector is achievable in the network flow problem 
(where fractional teacher flows are allowed) if 
and only if it belongs to the relaxed core of $w$. 

\subsection{The Megiddo-Dutta-Ray algorithm}

The first stage of our algorithm consists of applying a result from  cooperative game theory  to find a {\sl fractional} Lorenz dominant transfer. \citet{dutta1989concept} show that the core of a convex game always contains a Lorenz dominant 
allocation and provide an algorithm to identify it.\footnote{\citet{megiddo1974optimal} proposed 
the same algorithm to construct a lexicographically optimal 
flow. Note that a lexicographically optimal solution of a problem always exists but a Lorenz dominant solution may not.} Since our achievability constraints are {\sl relaxed} rather than standard core constraints, we cannot directly apply their result. However, as we show in Appendix \ref{app:mdr}, their algorithm also produces a fractional Lorenz dominant allocation with relaxed core constraints.

As noted earlier, a fractional solution does not correspond to a transfer. In the second stage of our algorithm we {\sl round} the output of the MDR algorithm. This produces an integral deficit vector that corresponds to a transfer and one  that Lorenz dominates all other deficit vectors arising from transfers. Details of the second stage of the algorithm are in the next section.

The MDR algorithm inductively partitions the set of deficit schools $D$ into sets $(D_1,\ldots,D_{k^\star})$. For any $i \in \{1,\ldots,k^\star\}$, let $\overline{D}_i = \cup_{j=1}^i D_j$. Define $D_1$ (and $\overline{D}_1)$ as the set having the highest average worth:
\begin{align*}
    D_1 \in \arg \max_{B \subseteq D} \frac{w(B)}{|B|}
\end{align*}
Given $D_1,\ldots,D_k$ for some $k$, define $D_{k+1}$ as the set having the highest average marginal contribution to $\overline{D}_k$:
\begin{align*}
    D_{k+1} \in \arg \max_{B \subseteq D \setminus \overline{D}_k}\frac{w(B \cup \overline{D}_k) - w(\overline{D}_k)}{|B|}
\end{align*}

Ties in the determination of $D_{k+1}$ are broken arbitrarily.

Since $D$ is finite, the algorithm generates a sequence of disjoint subsets of deficit schools: $(D_1,\ldots,D_{k^\star})$. Using this sequence, we construct a deficit vector $h^\star$ as follows. For every $D_j \in \{D_1,\ldots,D_{k^\star}\}$,
\begin{align}\label{eq:mdrdef}
    h^\star_i := \frac{w(\overline{D}_j) - w(\overline{D}_{j-1})}{|D_j|}~\qquad~\forall~d_i \in D_j,
\end{align}
where $\overline{D}_0 \equiv \emptyset$ with $w(\emptyset)=0$.

Theorem \ref{theo:mdr} in Appendix \ref{app:mdr} shows that $h^\star$ is Lorenz dominant deficit vector. Its proof  relies on some key claims. Claim \ref{cl:maxach} shows that $h^\star$ is achievable. Claim \ref{cl:orderld} establishes an ordering property
of $h^*$:   $h^\star_i \ge h^\star_j$ if $d_i \in D_k$ and  $d_j \in D_{k+1}$. Another important property of $h^{*}$ that follows from  (\ref{eq:mdrdef}) is
\begin{align} \label{eq:corebind}
    \sum_{d_i \in \overline{D}_j} h^\star_i = w(\overline{D}_j)~\qquad~\forall~j \in \{1,\ldots,k^\star\} 
\end{align}
The sets $\overline{D}_j$, $j=1, \ldots k^*$ are the subsets of deficit schools where the core constraints are binding for $h^\star$. 

According to (\ref{eq:mdrdef}), $h^\star_i$ is a ratio of two integers. This immediately suggests that $h^\star$ is not integral and that rounding is required in order to obtain a meaningful result on teacher transfers. The next two subsections address this issue.

\subsection{Arbitrary roundings do not work: An example}

This subsection provides an example (Example $2$) to show that the rounding issue is a delicate one. The example also illustrates the working
of the MDR algorithm.

The sets of deficit schools, surplus schools and teachers are 
$S =\{s_1,s_2,s_3\}$, $D=\{d_1,d_2,d_3,d_4,d_5,d_5,d_7\}$ and $T=\{t_1,t_2,t_3,t_4,t_5\}$ respectively. 
The initial assignment of teachers is given by: $O(t_1)=s_1$, $O(t_2)=O(t_3)=s_2$ and $O(t_4)=O(t_5)=s_3$. 
The acceptable sets are as follows: $A(t_1)=\{d_1,d_2\}$, $A(t_2)=A(t_3)=\{d_1,d_2,d_3,d_4,d_5\}$, 
$A(t_4)=\{d_6\}$ and $A(t_5)=\{d_6,d_7\}$. The deficit of every school is $5$. 
The surplus of schools $s_1$, $s_2$ and $s_3$ are $\alpha_1=1$ and $\alpha_2=\alpha_3=2$ respectively. 
The example and the associated network flow formulation of the problem
is illustrated in Figure \ref{fig:example}. 

\begin{figure}[!hbt]
	\centering
	\includegraphics[width=2.5in]{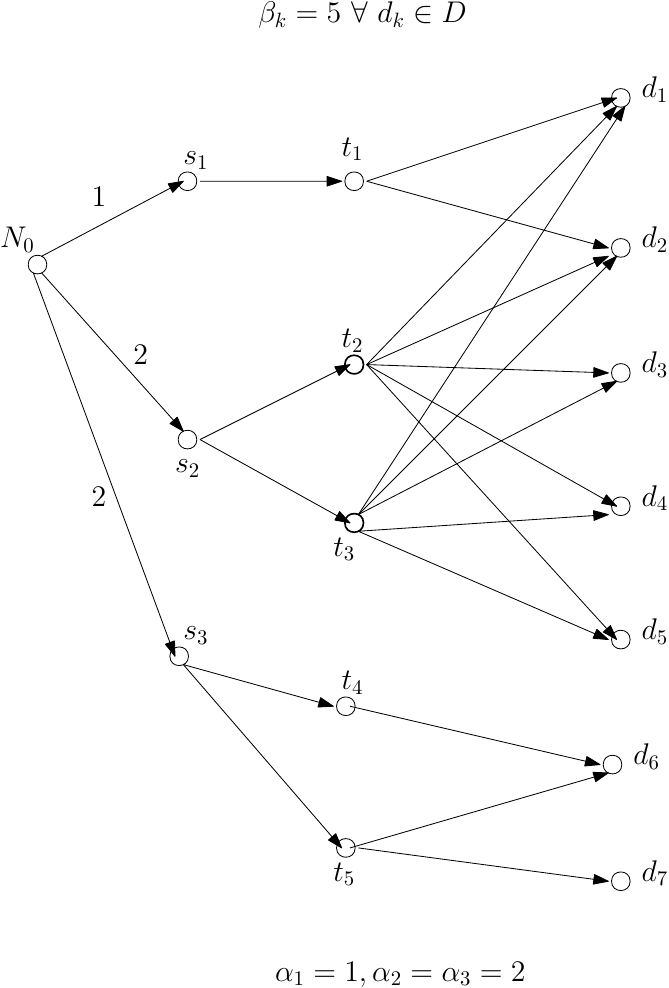}
	\caption{Example $2$}
	\label{fig:example}
\end{figure}

We omit the details of the computation of the characteristic function $w$  and simply note that the maximum average worth of a coalition is achieved by coalition $D_1=\{d_1,d_2,d_3,d_4,d_5\}$.
\begin{align*}
    \frac{w(D_1)}{|D_1|} = \frac{22}{5} = \max_{B: B \subseteq D} \frac{w(B)}{|B|}.
\end{align*}
This generates $D_1$ of the MDR algorithm. Having identified $D_1$, the average marginal contributions of subsets of remaining deficit schools are computed.
\begin{align*}
w(D_1 \cup \{d_6\}) - w(D_1) &= 3 \\
w(D_1 \cup \{d_7\}) - w(D_1) &= 4 \\
\frac{w(D_1 \cup \{d_6,d_7\}) - w(D_1)}{2} &= 4
\end{align*}
There are two possible choices for $D_2$: either $D_2 =\{d_7\}$ or $D_2=\{d_6,d_7\}$. 
Either choice can be made, so we set  $D_2=\{d_6,d_7\}$. The partition 
chosen by the MDR algorithm is therefore $\{\{d_1,\ldots,d_5\},\{d_6,d_7\}\}$.  The resultant Lorenz dominant deficit vector is:
\begin{align*}
    h^\star_i = 
    \begin{cases}
        4.4 & \textrm{if}~d_i \in D_1 \\
        4 & \textrm{if}~d_i \in D_2
    \end{cases}
\end{align*}
 It is clear that the vector $h^\star$ cannot be obtained from an integer flow (or transfer). 

An important observation here is that $h^\star$ cannot be approximated by an arbitrary rounding.
For example, consider the following rounding of $h^\star$: $$\hat{h}_1=\hat{h}_2=5,\hat{h}_3=\hat{h}_4=\hat{h}_5=4, \hat{h}_6=\hat{h}_7=4.$$
This is a ``valid" rounding of $h^\star$ in the sense that for each $d_i \in D$, we have
\begin{align*}
    \hat{h}_i \in \{ \lfloor h^\star_i \rfloor, \lceil h^\star_i \rceil\}.
\end{align*}
Further, $h^\star(D_1)=\hat{h}(D_1)$ and $h^\star(D_2)=\hat{h}(D_2)$.\footnote{A rounding of $h^\star$ satisfying these two properties will be referred to as an MDR-consistent rounding. See Section \ref{sec:rounding}.}

The difficulty is that $\hat{h}$ is {\bf not achievable}, i.e., there is no feasible transfer that can generate this deficit vector. To see this, consider $B=\{d_3,d_4,d_5\}$ and verify that $w(B)=13$. But $\hat{h}(B)=12 < w(B)$ implying non-achievability.
This can also be directly verified by
noting that only two teachers $t_2$ and $t_3$ can be transferred to $\{d_3,d_4,d_5\}$. Consequently, the aggregate
deficit of $\{d_3,d_4,d_5\}$ cannot be reduced from $15$ to $12$.

\subsection{MDR-consistent rounding}
\label{sec:rounding}

The previous example demonstrates that an arbitrary rounding of the output of the MDR algorithm may not be achievable. The critical step in the proof of Theorem \ref{theo:main} is to show that there exists an achievable rounding where
the core binding constraints (\ref{eq:corebind}) continue to hold for the new integer deficit vector. The proof is completed by showing that this vector Lorenz dominates every other integral achievable deficit vector.

\begin{defn}
    \label{def:mdrcons}
    A deficit vector $h$ is {\bf MDR-consistent} if it is an integral achievable deficit vector such that
    \begin{align}\label{eq:mdr1}
        h_i &\in \{ \lfloor h^\star_i \rfloor, \lceil h^\star_i \rceil\}~\qquad~\forall~d_i \in D \\
        h(\overline{D}_j) &= w(\overline{D}_j)~\qquad~\forall~j \in \{1,\ldots,k^\star\} \label{eq:mdr2}
    \end{align}
\end{defn}

An MDR-consistent deficit vector corresponds to a transfer as it is integral and achievable. 
\begin{prop}[Rounding]
\label{prop:int}
An MDR-consistent deficit vector exists.
\end{prop}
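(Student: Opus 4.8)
The plan is to recast the two defining conditions (\ref{eq:mdr1}) and (\ref{eq:mdr2}) in the language of flows and to obtain the desired vector by rounding the fractional MDR flow inside a suitably augmented network. Let $f^\star$ be a fractional flow in $G$ whose sink values realize $h^\star$ (such a flow exists because $h^\star$ is achievable), so that $g^\star_i := \beta_i - h^\star_i = f^\star(d_i)$ is the flow into $d_i$. The binding equalities (\ref{eq:corebind}) translate into $g^\star(\overline{D}_j) = \beta(\overline{D}_j) - w(\overline{D}_j) = v(\overline{D}_j)$ for every $j$; in particular each block carries the amount $V_j := v(\overline{D}_j) - v(\overline{D}_{j-1}) = g^\star(D_j)$, which is an integer since $v$ is integer-valued. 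The point is that an MDR-consistent vector is exactly $h_i = \beta_i - f(d_i)$ for an integral feasible flow $f$ in $G$ satisfying $f(d_i) \in \{\lfloor g^\star_i\rfloor, \lceil g^\star_i\rceil\}$ and $f(D_j) = V_j$ for all $j$: the first condition is (\ref{eq:mdr1}), the second gives $f(\overline{D}_j) = v(\overline{D}_j)$ and hence $h(\overline{D}_j) = w(\overline{D}_j)$, which is (\ref{eq:mdr2}), and achievability of $h$ is automatic because $h$ is generated by a feasible flow (Proposition \ref{prop:megiddo}).

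Next I would augment $G$ to pin down the layer sums. Introduce one node $L_j$ for each block $D_j$ of the MDR partition and a single super-sink $Z$; add an edge $(d_i,L_j)$ for every $d_i \in D_j$ and an edge $(L_j,Z)$ for every $j$. Extending $f^\star$ by sending $g^\star_i$ along $(d_i,L_j)$ and $V_j$ along $(L_j,Z)$ yields a feasible fractional flow $\hat f^\star$ in this augmented network $\hat G$, of value $v(D)$. The crucial structural feature is that $\hat f^\star(L_j,Z) = V_j$ is an integer for every $j$, even though the individual sink values $g^\star_i$ are typically fractional.

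The key step is then a standard flow-rounding fact: since all capacities in $\hat G$ are integral, the fractional flow $\hat f^\star$ can be rounded to an integral feasible flow $f$ with $\lfloor \hat f^\star(e)\rfloor \le f(e) \le \lceil \hat f^\star(e)\rceil$ on every edge $e$ (for instance by repeatedly cancelling cycles supported on the fractional edges, or equivalently by computing an integral maximum flow in $\hat G$). Reading this integral flow off the augmented edges delivers everything simultaneously: $f(d_i) = f(d_i,L_j) \in \{\lfloor g^\star_i\rfloor, \lceil g^\star_i\rceil\}$ yields (\ref{eq:mdr1}), while $f(L_j,Z) \in \{\lfloor V_j\rfloor,\lceil V_j\rceil\} = \{V_j\}$ is forced, so $f(D_j) = V_j$ and hence (\ref{eq:mdr2}) hold. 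Restricting $f$ to the edges of $G$ preserves flow balance and respects the integral capacities of $G$, so $h_i = \beta_i - f(d_i)$ is an integral achievable vector, i.e. MDR-consistent.

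I expect the main obstacle to be enforcing the floor and ceiling bounds on the individual sinks in (\ref{eq:mdr1}) simultaneously with the exact nested equalities in (\ref{eq:mdr2}); Example $2$ shows that a rounding which matches the block sums alone may still fail achievability. The augmented-network device resolves this tension precisely because the per-block loads $V_j$ are integral, a consequence of (\ref{eq:corebind}) together with the integrality of $v$: routing each block $D_j$ through its own node $L_j$ converts each nested equality into a single integer-valued edge $(L_j,Z)$ that the rounding step cannot disturb, while the per-edge floor/ceiling guarantee of flow rounding supplies both bounds in (\ref{eq:mdr1}) and the feasibility needed for achievability.
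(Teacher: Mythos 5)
Your proposal is correct and follows essentially the same route as the paper: both augment $G$ with one node per MDR block, exploit that the binding constraints (\ref{eq:corebind}) together with the integrality of $v$ make each block total $V_j = v(\overline{D}_j) - v(\overline{D}_{j-1})$ an integer, and round the fractional flow realizing $h^\star$ so that per-edge floor/ceiling bounds yield (\ref{eq:mdr1}) while the pinned block sums yield (\ref{eq:mdr2}). The only, immaterial, difference is the rounding tool --- you invoke edge-wise flow rounding via cycle cancelling, whereas the paper places integral bounds $[\lfloor \beta_i - h^\star_i \rfloor, \lceil \beta_i - h^\star_i \rceil]$ on the new edges and uses the max-flow integrality theorem plus a saturation argument at the block nodes; your parenthetical that this is ``equivalently'' an integral max-flow computation in $\hat G$ is accurate only once such lower bounds are imposed, exactly as the paper does.
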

\begin{proof}
An auxiliary network $G^{{\rm AU}}$ is constructed by augmenting the network $G$ using the output of the MDR algorithm $\{D_1,\ldots,D_{k^\star}\}$. Abusing notation slightly, the deficit of any school in $D_j$ generated by the MDR algorithm is denoted  by $h^\star_j$. Though $h^\star_j$ is not integral, $h^\star_j |D_j|$ is integral and equals $w(\overline{D}_j) - w(\overline{D}_{j-1})$. The following set of nodes and edges is added to the network $G$ to construct $G^{{\rm AU}}$:
\begin{enumerate}
    \item For each $j \in \{1,\ldots,k^\star\}$, a node is added representing the subset of deficit schools $D_j$. Each node $D_j$ has a capacity of $\sum_{d_i \in D_j}\big(\beta_i - h^\star_i)$ (integral).
    
    \item For each $j \in \{1,\ldots,k^\star\}$ and each deficit school $d_i \in D_j$, an edge $(d_i,D_j)$ is added. The flow in each such edge must lie in $[\lfloor \beta_i - h^\star_i\rfloor, \lceil \beta_i - h^\star_i \rceil ]$.
\end{enumerate}
All capacities and bounds on flows in $G^{{\rm AU}}$ are integers.
The construction of $G^{{\rm AU}}$ for the example in Figure \ref{fig:example} is illustrated in Figure \ref{fig:aux}. 

\begin{figure}[!hbt]
	\centering
	\includegraphics[width=4in]{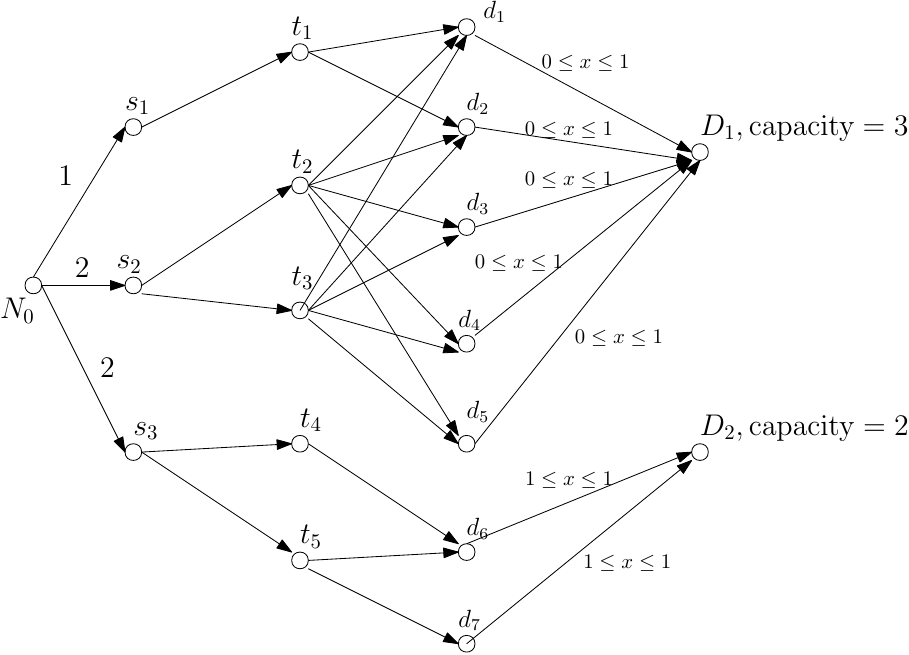}
	\caption{Graph $G^{{\rm AU}}$ in Example in Figure \ref{fig:example}}
	\label{fig:aux}
\end{figure}

Since $h^\star$ is achievable, there 
exists flow $f^\star$ such that $f^\star(d_k)=\beta_k-h^\star_k$ for
all $d_k \in D$. Construct the flow $f^{\star \star}$ by augmenting $f^\star$ as follows: for 
every $d_k$, the flow on the edge $(d_k,D_j)$ 
where $d_k \in D_j$ is $\beta_k-h^\star_k$. Clearly $f^{\star \star}$ satisfies the additional flow balancing and capacity constraints in
$G^{{\rm AU}}$, and hence, is a feasible flow in the augmented network. The flow received by a (sink) node $D_j$ in 
$f^{\star \star}$ is, 
\begin{align}
\label{eq:hath0}
    f^{\star \star}(D_j) =  \sum_{d_k\in D_j} f^{\star \star}(d_k,D_j) = \sum_{d_k \in D_j} \big(\beta_k-h^\star_k\big) = \beta(D_j) - h^\star(D_j)
\end{align}

The RHS of the equation above is the capacity of node $D_j$. 
It follows that $f^{\star \star}$ is a maximum flow in $G^{{\rm AU}}$.

Since all capacities in $G^{{\rm AU}}$ are integers, the integrality 
theorem in network flows (see Page 186 in \citet{ahuja1988network})
implies that there exists a integral maximum flow $\hat f$. Therefore 
$\sum_{j=1}^{k^\star} f^{\star \star} (D_j)=\sum_{j=1}^{k^\star} \hat f(D_j)$. 
In order to satisfy capacity constraints of each node $D_j$, 
we must have 
\begin{align}\label{eq:bind1}
    f^{\star \star}(D_j) &= \hat f(D_j)~\qquad~\forall~j \in \{1,\ldots,k^\star\}. 
\end{align}

Define a new deficit vector $\hat h_k:=\beta_k-\hat f(d_k)$ for all $d_k\in D$. Hence, 
\begin{align}
    \hat{h}(D_j) = \beta(D_j) - \hat{f}(D_j)~\qquad~\forall~j \in \{1,\ldots,k^\star\} \label{eq:hath1}
\end{align}
Note that $\hat h$ is the post-transfer deficit vector corresponding 
to $\hat f$. From (\ref{eq:hath0}), (\ref{eq:bind1}), and (\ref{eq:hath1}), we get  $\hat{h}(D_j)=h^\star(D_j)$ for all $j \in \{1,\ldots,k^\star\}$.
This further implies that 
\begin{align}
    \label{eq:bind3}
    \hat{h}(\overline{D}_j) &= h^\star(\overline{D}_j) = w(\overline{D}_j)~\qquad~\forall~j \in \{1,\ldots,k^\star\},
\end{align}
where the last equality follows from the property we established for $h^\star$ (see Equation (\ref{eq:corebind})).

We complete the proof by showing that the integrality constraint is satisfied. By the construction of $G^{{\rm AU}}$, $\hat f$ induces an integer flow in the original network $G$.
It follows that $\hat h$ is an integral achievable post-transfer deficit 
vector in $G$.

For any school $d_k\in D$, we have $\hat f(d_k)=\beta_k-\hat h_k$. 
By flow balancing at node $d_k$ where $d_k\in D_j$, we have 
$\hat f(d_k)=\hat f(d_k,D_j)$. Since the flow $\hat{f}(d_k,D_j)$ must satisfy the capacity constraint of edge $(d_k,D_j)$, this implies
\begin{align*}
    \left \lfloor{\beta_k-h^\star_k}\right \rfloor &\leq \hat f(d_k) \leq \left \lceil{\beta_k-h^\star_k}\right \rceil\\
	\implies  \left \lfloor{\beta_k-h^\star_k}\right \rfloor &\leq \beta_k-\hat h_k \leq \left \lceil{\beta_k-h^\star_k}\right \rceil \\
	\implies  \beta_k-\left \lceil{h^\star_k}\right \rceil &\leq \beta_k-\hat h_k \leq \beta_k-\left \lfloor{h^\star_k}\right \rfloor \\
	\implies  \left \lfloor{h^\star_k}\right \rfloor &\leq \hat h_k \leq \left \lceil{h^\star_k}\right \rceil
\end{align*}
This establishes that $\hat{h}$ is MDR-consistent.
\end{proof}

Proposition \ref{prop:int} together with Proposition \ref{prop:intld} below complete 
the proof of Theorem \ref{theo:main}.
An MDR consistent deficit vector in Example $2$ is: $$\hat{h}_1=5,\hat{h}_2=\hat{h}_3=\hat{h}_4=4,\hat{h}_5=5, \hat{h}_6=\hat{h}_7=4.$$

\begin{prop}[Lorenz dominance]
    \label{prop:intld}
    An MDR-consistent deficit vector Lorenz dominates every other integral achievable deficit vector.
\end{prop}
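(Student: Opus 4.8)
The plan is to verify the defining inequalities (\ref{eq:major}) for Lorenz dominance, namely $\sum_{i=1}^{k}\hat h_{[i]}\le \sum_{i=1}^{k} h_{[i]}$ for every $k$ and every integral achievable $h$, by reducing them to a family of ``threshold'' inequalities that interact cleanly with the relaxed core constraints. The two elementary identities I will use are, for any $x\in\Re^{L}$ and any $t\in\Re$,
\[
\sum_{d_i\in D}(x_i-t)_+ = \max_{B\subseteq D}\big(x(B)-t|B|\big), \qquad \sum_{i=1}^{k} x_{[i]} = \min_{t\in\Re}\Big(kt+\sum_{d_i\in D}(x_i-t)_+\Big),
\]
where $(\cdot)_+=\max\{\cdot,0\}$. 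The second identity shows that once the threshold inequality $\sum_{d_i\in D}(\hat h_i-t)_+\le \sum_{d_i\in D}(h_i-t)_+$ holds for all $t$, the partial-sum inequalities (\ref{eq:major}) follow by taking the minimum over $t$ on both sides (this is the elementary route underlying Fact \ref{fact:hlp}). Because $h$ is integer valued, the minimizing $t$ for $h$ in the second identity is an integer, so it suffices to prove the threshold inequality for integer $t$; equivalently, using the first identity, I must show $\max_{B}\big(\hat h(B)-t|B|\big)\le \max_{B}\big(h(B)-t|B|\big)$ for every integer $t$.

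The heart of the argument is to locate, for each integer $t$, a single subset that attains the left-hand maximum \emph{and} is tight in the relaxed core for $\hat h$. The left-hand maximum is attained by any $B$ with $\{d_i:\hat h_i>t\}\subseteq B\subseteq\{d_i:\hat h_i\ge t\}$. I claim the prefix $\overline{D}_{a}$ works, where $a:=|\{j:\lfloor h^\star_j\rfloor\ge t\}|$ and $h^\star_j$ is the common value of $h^\star$ on block $D_j$. Indeed, MDR-consistency (\ref{eq:mdr1}) gives that every coordinate of $\hat h$ on $D_j$ lies in $\{\lfloor h^\star_j\rfloor,\lceil h^\star_j\rceil\}$, and the ordering property (Claim \ref{cl:orderld}) gives $h^\star_1\ge h^\star_2\ge\cdots$, so the floors $\lfloor h^\star_j\rfloor$ are non-increasing and $\overline{D}_a$ is exactly the union of blocks with floor at least $t$. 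Every coordinate of $\hat h$ in $\overline{D}_a$ is then $\ge\lfloor h^\star_j\rfloor\ge t$, so $\overline{D}_a\subseteq\{d_i:\hat h_i\ge t\}$; conversely any coordinate with $\hat h_i\ge t+1$ sits in a block with $\lceil h^\star_j\rceil\ge t+1$, hence $\lfloor h^\star_j\rfloor\ge t$, so it lies in $\overline{D}_a$, giving $\{d_i:\hat h_i>t\}\subseteq\overline{D}_a$. Thus $\overline{D}_a$ sandwiches correctly.

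With this set in hand the conclusion is immediate: using that $\overline{D}_a$ attains the left-hand maximum, then MDR-consistency (\ref{eq:mdr2}), then achievability of $h$ (the relaxed core constraint $h(\overline{D}_a)\ge w(\overline{D}_a)$), and finally that $\overline{D}_a$ is admissible in the right-hand maximum,
\[
\max_{B}\big(\hat h(B)-t|B|\big)=\hat h(\overline{D}_a)-t|\overline{D}_a|=w(\overline{D}_a)-t|\overline{D}_a|\le h(\overline{D}_a)-t|\overline{D}_a|\le \max_{B}\big(h(B)-t|B|\big).
\]
The boundary cases $a=0$ (with $\overline{D}_0=\emptyset$, $w(\emptyset)=0$) and $a=k^\star$ (with $\overline{D}_{k^\star}=D$) are covered by the same display, and the linear regime $t\le 0$ reduces to $\hat h(D)=w(D)\le h(D)$.

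The step I expect to be delicate is precisely the choice of witnessing set. The naive attempt---to show that the upper level set $\{d_i:\hat h_i\ge t\}$ of $\hat h$ is itself tight---fails, as Example $2$ already shows (there $\{d_1,d_5\}$ carries $\hat h$-mass $10$ but $w$-value only $7$). The resolution, and the reason MDR-consistency is the right notion, is that the level set need not be tight: it is enough that \emph{some} tight set---here the prefix block $\overline{D}_a$ supplied by (\ref{eq:mdr2})---lies between the strict and weak level sets of $\hat h$, and this sandwiching is furnished by the within-block floor/ceiling rounding (\ref{eq:mdr1}) together with the monotonicity of the block values. Everything else is bookkeeping via the two identities above.
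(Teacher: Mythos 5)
Your proof is correct, and it takes a genuinely different route from the paper's. The paper first merges MDR blocks sharing the same floor/ceiling pair into a coarser partition $(D^+_1,\ldots,D^+_{\ell^\star})$, orders $\hat h$ within each merged block, and then splits the partial-sum inequality (\ref{eq:major}) into two cases: Case A, where the index $j$ sits at a prefix boundary $|\overline{D}^+_\ell|$ (handled, as in your display, by tightness $\hat h(\overline{D}^+_\ell)=w(\overline{D}^+_\ell)$ plus the relaxed core constraint for $h$), and Case B, where $j$ is interior to a block (handled by a single-crossing argument: since $\hat h$ takes only the two consecutive integer values $\overline{h}^\star,\underline{h}^\star$ on the block, any integral $h$ crosses it once, and the subcases $h_{[j]}\ge\overline{h}^\star$ and $h_{[j]}\le\underline{h}^\star$ are argued separately, the latter working backwards from the block-end inequality). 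You instead invoke the Hardy--Littlewood--P\'olya threshold characterization underlying Fact \ref{fact:hlp}: the identities $\sum_i(x_i-t)_+=\max_B\bigl(x(B)-t|B|\bigr)$ and $\sum_{i=1}^k x_{[i]}=\min_t\bigl(kt+\sum_i(x_i-t)_+\bigr)$, with the integer minimizer $t=h_{[k]}$ available because $h$ is integral, reduce (\ref{eq:major}) to level-set inequalities at integer thresholds, and each such inequality is witnessed by the tight prefix $\overline{D}_a$, whose sandwiching between the strict and weak level sets of $\hat h$ you correctly derive from (\ref{eq:mdr1}), integrality of $\hat h$ and $t$ (via $\lceil h^\star_j\rceil\ge t+1\Rightarrow\lfloor h^\star_j\rfloor\ge t$), and the block monotonicity of Claim \ref{cl:orderld}; the final chain through (\ref{eq:mdr2}) and achievability of $h$ is sound, including the boundary prefixes $\overline{D}_0=\emptyset$ and $\overline{D}_{k^\star}=D$. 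Your route buys several things: it eliminates the paper's block-merging step and the within-block single-crossing case analysis entirely (the minimax identity performs exactly the interpolation between prefix boundaries that the paper's Case B does by hand --- your threshold inequality evaluated at block boundaries \emph{is} the paper's Case A), it never needs $\hat h$ to be ordered within blocks, and it makes transparent precisely where integrality of both $\hat h$ and $h$ enters (restriction to integer thresholds is what lets floor/ceiling rounding respect the level sets). The paper's route buys self-containedness --- no variational identities, just the elementary picture of Figure \ref{fig:proof_int} --- and geometric intuition for why an MDR-consistent rounding cannot be beaten inside a block. Your closing diagnosis of why the naive witness $\{d_i:\hat h_i\ge t\}$ fails (it need not be tight; in Example 2 it carries $\hat h$-mass $10$ against $w$-value $7$) correctly isolates the role of condition (\ref{eq:mdr2}), and parallels the paper's own Example 2 discussion of why arbitrary roundings fail.
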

\begin{proof}
    Consider an MDR-consistent deficit vector $\hat{h}$ --- by definition, it is achievable and integral. Recall that the output of the MDR algorithm is the achievable deficit vector $h^\star$ (not necessarily integral) and a partition of deficit schools $(D_1,\ldots,D_{k^\star})$ satisfying the following: for every $k \in \{1,\ldots,k^\star\}$ and every $d_i \in D_k$,
    \begin{align*}
        h^\star_i &= \frac{w(\overline{D}_k) - w(\overline{D}_{k-1})}{|D_k|} \\
        h^\star(\overline{D}_k) &= w(\overline{D}_k).
    \end{align*}
    MDR-consistency implies $h^\star(\overline{D}_k) = \hat{h}(\overline{D}_k) = w(\overline{D}_k)$ and $\hat{h}_i \in \{ \lfloor h^\star_i \rfloor, \lceil h^\star_i \rceil\}$. There may exist sets $D_k, D_{k+1}$ such that $\{ \lfloor h^\star_i \rfloor, \lceil h^\star_i \rceil\} = \{ \lfloor h^\star_j \rfloor, \lceil h^\star_j \rceil\}$, where $d_i \in D_k$ and $d_j \in D_{k+1}$. Sets whose integral floor and ceilings are the same in $h^\star$  are merged to form a new partition $(D^+_1,\ldots,D^+_{\ell^\star})$.   
    Clearly, $\ell^\star \le k^\star$. Different schools in $D^+_k$ may have different deficits in $h^\star$. However, there are three important facts about the new partition that we note below. The first is that $\{ \lfloor h^\star_i \rfloor, \lceil h^\star_i \rceil\} = \{ \lfloor h^\star_j \rfloor, \lceil h^\star_j \rceil\}$    
    for any $d_i,d_j \in D^+_k$, The second is that,  for any $k \in \{1,\ldots,\ell^\star\}$, we have 
    \begin{align}\label{eq:hlpin1}
        \hat{h}(\overline{D}^+_k) = h^\star(\overline{D}^+_k) = w(\overline{D}^+_k).
    \end{align}
    Finally, we have $\hat{h}_i \ge \hat{h}_j$ for any $d_i \in D^+_k$ and $d_j \in D^+_{k+1}$.

    Assume without loss of generality that $\hat{h}$ is ordered from highest to lowest in each $D^+_k$.
    Let $h$ be an arbitrary achievable integer deficit vector. Pick an arbitrary $j \in \{1,\ldots,L\}$. 
    The proof is completed by showing $\sum_{i=1}^j \hat{h}_{[i]} \le \sum_{i=1}^j h_{[i]}$. 
    We consider two cases.

    \noindent {\sc Case A:} $j = |\overline{D}^+_{\ell}|$ for some $\ell$. From (\ref{eq:hlpin1}), we have 
    \begin{align*}
        \sum_{i=1}^j \hat{h}_{[i]} = \hat{h}(\overline{D}^+_\ell) =_{(a)} w(\overline{D}^+_\ell) \le_{(b)} h(\overline{D}^+_\ell) \le_{(c)} \sum_{i=1}^j h_{[i]}.
    \end{align*}
    Equality (a) follows from MDR-consistency of $\hat{h}$. Inequality (b) follows from achievability of $h$. Inequality (c) follows because the RHS is the sum of the $j$ highest deficits in $h$.

    \noindent {\sc Case B:} Case A does not hold, i.e. $j \ne |\overline{D}^+_\ell|$ for any $\ell$. 
    Suppose the $j$-th highest deficit school in $\hat{h}$ belongs to $D_k^+$. Let $|\overline{D}^+_{k-1}| \equiv \ell_{k-1}$ and $|\overline{D}^+_{k}| \equiv \ell_k$ \footnote{If $k=1$, let $\overline{D}_{k-1}^+=\emptyset$ and $\ell_{k-1}=0$.}, so that $|D^+_k| = \ell_k - \ell_{k-1}$. From Case A, the Lorenz domination inequality holds for $\ell_{k-1}$ and $\ell_k$:
    \begin{align}
        \sum_{i=1}^{\ell_{k-1}}h_{[i]} &\ge \sum_{i=1}^{\ell_{k-1}}\hat{h}_{[i]} = \hat{h}(\overline{D}^+_{k-1}) \label{eq:hhh1} \\
        \sum_{i=1}^{\ell_k}h_{[i]} &\ge \sum_{i=1}^{\ell_{k}}\hat{h}_{[i]} = \hat{h}(\overline{D}^+_k). \label{eq:hh2}
    \end{align}
    
    Since $|D_k^+|=\ell_k - \ell_{k-1}$, the deficit vector $\hat{h}$ restricted to $D_k^+$ is an $(\ell_k - \ell_{k-1})$-dimensional vector of integers that are either the floor or the ceiling of deficit vectors in $h^\star$ restricted to $D_k^+$. Denote the floor and ceiling of $h^\star$ vector in $D_k^+$ by $\overline{h}^\star$ and $\underline{h}^\star$ respectively. It follows that $\hat{h}$ restricted to $D_k^+$ is a vector of the following form:
    \begin{align*}
        (\underbrace{\overline{h}^\star, \ldots, \overline{h}^\star},\underbrace{\underline{h}^\star,\ldots,\underline{h}^\star}).
    \end{align*}

Figure \ref{fig:proof_int} shows the distribution of the deficit vectors $\hat{h}$ and $h$ for schools in $D_k^+$ in an illustrative case. Let $D_k^+$ contain $10$ schools. The vectors $\hat{h}$ and $h$ (restricted to $D_k^+$) are ordered from high to low. Both $\hat{h}$ and $h$ are integral, so that a (deficit, school) pair 
in each vector is a point on the grid in the figure. The vector $\hat{h}$ either takes the floor $\underline{h}^\star$ or the ceiling $\overline{h}^\star$ value
and is represented by {\sc red} squares. The seven schools with highest deficits have value $\overline{h}^\star$ while the remaining three schools have value $\underline{h}^\star$. The vector $h$ is represented by {\sc blue} crosses. The five highest deficits in $h$ are at least as high as $\overline{h}^\star$, the 
next two coincide with $\underline{h}^\star$ and the remaining three are below $\underline{h}^\star$. A crucial observation is the following: since $\hat{h}$ is integral and $\overline{h}^\star$ and $\underline{h}^\star$ are consecutive integers, $h$ values are less than or equal to $\underline{h}^\star$
from the sixth school onwards. In other words, $\hat{h}$ and $h$ satisfy a {\sl single-crossing} property.\footnote{The single-crossing property is trivially achieved in the continuous version, and exploited in Claim \ref{cl:hlp} due to \citet{HLP29}.} This allows us to break the proof of Case B into two parts. 
The first part considers $j$ where the all schools upto the $j$-th highest school have deficit values in $h$ greater than or equal to $\overline{h}^\star$.
In the figure, this corresponds to $j\leq 5$. The second part considers the case where the deficit of the $j$-th highest school in $h$ is $\underline{h}^\star$
or lower, i.e. $j\geq 6$ in the figure. 

    Pick any $j \in \{\ell_{k-1}+1,\ell_{k-1}+2,\ldots,\ell_k\}$. If $h_{[j]} \ge \overline{h}^\star$, then $h_{[\ell_{k-1}+1]} \ge \ldots \ge h_{[j]} \ge \overline{h}^\star$ implies 
    \begin{align*}
        \sum_{i=\ell_{k-1}+1}^j h_{[i]} \ge \sum_{i=\ell_{k-1}+1}^j\overline{h}^\star \ge \sum_{i=\ell_{k-1}+1}^j\hat{h}_{[i]},
    \end{align*}
    where the second inequality uses the fact that $\hat{h}_{i} \le \overline{h}^\star$ for all $i \in \{\ell_{k-1}+1,\ldots,\ell_k\}$.
    Now, using inequality (\ref{eq:hhh1}) and adding $\sum_{i=1}^{\ell_{k-1}}h_{[i]}$ on the LHS and $\hat{h}(\overline{D}^+_{k-1})$ on the RHS, we get 
    \begin{align*}
        \sum_{i=1}^j h_{[i]} \ge \sum_{i=1}^j\hat{h}_{[i]}.
    \end{align*}

    \begin{figure}
        \centering
        \includegraphics[width=0.5\linewidth]{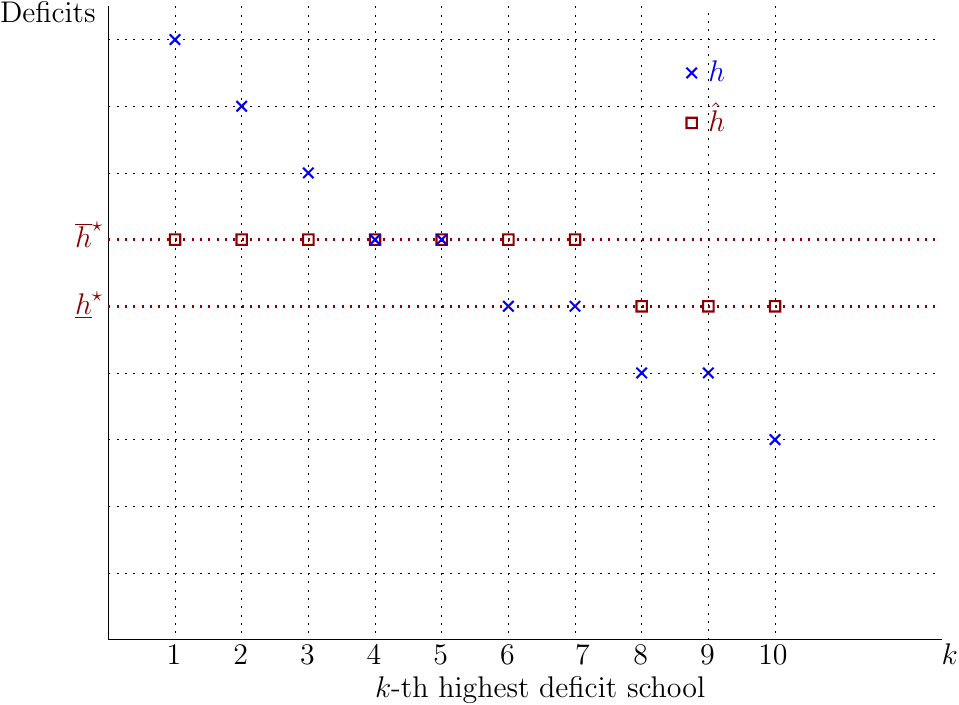}
        \caption{Single crossing of ordered integral vectors $\hat{h}$ and $h$.}
        \label{fig:proof_int}
    \end{figure}
    
    Suppose $h_{[j]} < \overline{h}^\star$. Since $h_{[j]}$ is an integer and $\overline{h}^\star=\underline{h}^\star+1$, we have $h_{[j]} \le \overline{h}^\star - 1 = \underline{h}^\star$. But then, $\underline{h}^\star \ge h_{[j]} \ge h_{[j+1]} \ge \ldots \ge h_{[\ell_k]}$. Hence,
    \begin{align*}
        \sum_{i=1}^j h_{[i]} = \sum_{i=1}^{\ell_k}h_{[i]} - \sum_{i=j+1}^{\ell_k} h_{[i]} &\ge_{(a)} \hat{h}(\overline{D}_k^+) - \sum_{i=j+1}^{\ell_k} h_{[i]} \ge_{(b)} \hat{h}(\overline{D}_k^+) - \sum_{i=j+1}^{\ell_k} \underline{h}^\star \\
        &\ge_{(c)} \hat{h}(\overline{D}_k^+) - \sum_{i=j+1}^{\ell_k} \hat{h}_i =_{(d)} \sum_{i=1}^j \hat{h}_i.
    \end{align*}
    Inequality (a) is due to (\ref{eq:hh2}). Inequality (b) follows because $\underline{h}^\star \ge h_{[j]}$. 
    Inequality (c) follows because $\hat h_i\geq \underline{h}^\star$ for all $i\in [1,L]$. Equality (d) uses 
    the definition of $\hat{h}$. 
    
    This completes the proof of Proposition \ref{prop:intld} and Theorem \ref{theo:main}. 
\end{proof}

\begin{remark}\label{rem:eff} \rm We briefly comment on the computational complexity of finding the Lorenz optimal transfer. This transfer can be computed efficiently (i.e., polynomial time in the size of the input) using well-known combinatorial algorithms to compute a maximum-flow in a network. Recall that our two-stage procedure relies on (i) finding the Lorenz optimal fractional assignment in the initial network; and on (ii) finding a maximum-flow in an auxiliary network that is constructed using the knowledge of the Lorenz optimal fractional assignment. \cite{Meg77} describes an efficient algorithm for finding a lexicographically optimal flow in a network; this algorithm, in fact, computes a Lorenz optimal flow because a Lorenz optimal flow {\em is} lexicographically optimal.\footnote{The problem of finding a lex-optimal flow in a network is fundamental and has received a lot of attention in the network flow literature. Since ~\cite{Meg77}, several authors have identified faster algorithms for solving this problem, culminating in the work of~\cite{GalloGT89}, who describe how to pipeline various max-flow computations to find a lex-optimal flow in time that is asymptotically the same as finding a single max-flow.}

The auxiliary network can be constructed efficiently once a Lorenz optimal fractional assignment is known. Moreover, an integer Lorenz optimal assignment can be found by finding an integer max-flow in the auxiliary network that has upper and lower bounds on arc-flows, and this can be computed using any efficient combinatorial algorithm for finding a max-flow such as the Edmonds-Karp algorithm~\citep{EK72}.
\end{remark}

\section{Strategy-proofness}
\label{sec:SP}

In this section, we consider the problem of a planner who can observe the surpluses and deficits of schools but not the acceptable
schools of teachers. We show that the planner can nevertheless achieve the Lorenz dominant 
post-transfer deficit vector by incentivizing teachers to report their acceptable schools truthfully.

We assume that every teacher has ``trichotomous'' preferences: the top indifference class consists of her acceptable deficit schools, followed by the surplus school she is assigned to, and lastly, the set of non-acceptable deficit schools.
In particular, each teacher $t_i$ is indifferent between acceptable deficit schools in 
$A(t_i)$ and strictly prefers being transferred to a school in $A(t_i)$ 
than remaining in her initial assignment. Her initial assignment 
is strictly preferred to any school in $D\setminus A(t_i)$. As mentioned in the literature review, variants of such preferences have been used in \cite{M21} and \citet{ACEE21}. 

Let $A \equiv (A_1,\ldots,A_{|T|})$ denote a profile of acceptable sets (or schools). Let $\Sigma(A)$ denote the set of all transfers that result in the Lorenz dominant deficit vector. Although  $\Sigma(A)$ is non-empty (Theorem \ref{theo:main}),
it can contain multiple transfers. For instance, in the example in Figure \ref{fig:second}, the Lorenz dominant transfer $(1,1,2)$ can be achieved by three transfers. In all of them no teacher is transferred to school $d_1$. 
In the first, teachers $t_1$ and $t_2$ are transferred to $d_2$; in the second, teachers $t_1$ and $t_3$ are transferred to $d_2$; in the third, teacher $t_1$ is transferred to $d_2$ and teacher $t_3$ is transferred to $d_3$. 
Suppose that for every profile $A$, a transfer is selected from $\Sigma(A)$ according to a {\it fixed tie-breaking rule}.  
We show in Proposition \ref{prop:ic} that this mechanism is strategy-proof. 

A {\it matching} $\mu:T \rightarrow D$ is a mapping from set of teachers to deficit schools. A matching does not take into account acceptable schools of teachers or surplus constraints of surplus schools. A matching therefore differs from a transfer.
Let $\mathcal{M}$ be the set of all possible matchings and let $\rhd$ be an arbitrary strict ordering of the elements of  $\mathcal{M}$. Let $\max_\rhd(\Sigma(A))$ denote the maximal transfer in $\Sigma(A)$ according to $\rhd$.
A {\bf Lorenz dominant transfer mechanism (supplemented by tie-breaking $\rhd$)}, {\sc ldt}$_\rhd$, selects the transfer {\sc ldt}$_\rhd(A) := \max_\rhd(\Sigma(A))$ at every profile $A$. We denote the deficit school assigned to teacher $t_j \in T$ at profile $A$ by {\sc ldt}$_\rhd(t_j,A)$. Note that {\sc ldt}$_\rhd(t_j,A) \in A(t_j) \cup \{O(t_j)\}$ since our algorithm does not transfer any teacher to a non-acceptable school (according to reported preferences).

The Lorenz dominant mechanism {\sc ldt}$_\rhd(A)$ is {\bf strategy-proof} if for every teacher $t_j \in T$, every $A(t_j) \subseteq D$ and every $A(t_{-j})$ the following holds: 
\begin{align*}
\Big[{\rm LDT}_\rhd(t_j,A(t_j),A(t_{-j})) = O(t_j) \Big] \implies \Big[{\rm LDT}_\rhd(t_j,B,A(t_{-j})) \notin A(t_j)~\qquad \forall~B \subseteq D\Big].
\end{align*}
Strategy-proofness requires the following:  if a teacher is not transferred (remains in her initial surplus school) by reporting her true preference, then she {\sl cannot} be transferred to an acceptable school by reporting any other set of acceptable schools. It must be emphasized that a change in  a teacher's report of acceptable schools affects the set of feasible transfers, which in turn has a bearing on achievable deficit vectors.

\begin{prop}
    \label{prop:ic}
    The Lorenz dominant transfer mechanism {\sc ldt}$_\rhd$ is strategy-proof.
\end{prop}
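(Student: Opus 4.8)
The plan is to argue by contradiction, and the crucial realization is that none of the combinatorial structure of the network needs to be re-examined: two elementary feasibility observations, together with the fact that $\rhd$ is a single \emph{fixed} strict order, already force the conclusion. So first I would set up the contradiction hypothesis. Write $\sigma^\circ := \mathrm{LDT}_\rhd(A(t_j),A(t_{-j}))$ for the transfer selected under truthful reporting and assume the antecedent, namely $\sigma^\circ(t_j)=O(t_j)$. Fix an arbitrary report $B\subseteq D$, let $\sigma' := \mathrm{LDT}_\rhd(B,A(t_{-j}))$, and suppose for contradiction that $\sigma'(t_j)=d^\star$ for some $d^\star\in A(t_j)$; note that necessarily $d^\star\in A(t_j)\cap B$, since the mechanism never sends a teacher to a school outside her reported set, and $d^\star\neq O(t_j)$ because $O(t_j)\in S$ while $A(t_j)\subseteq D$.

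Next I would record two symmetric feasibility facts. (i) Since $d^\star\in A(t_j)$ and every other teacher reports the same set, $\sigma'$ is itself a feasible transfer under the \emph{truthful} profile $(A(t_j),A(t_{-j}))$; hence $\beta^{\sigma'}$ is achievable there, so the truthful Lorenz-dominant vector satisfies $\beta^{\sigma^\circ}\lord\beta^{\sigma'}$. (ii) Since $\sigma^\circ(t_j)=O(t_j)$ and remaining at one's own surplus school is feasible regardless of the report, $\sigma^\circ$ is a feasible transfer under $(B,A(t_{-j}))$ as well; hence $\beta^{\sigma^\circ}$ is achievable there, so the $B$-Lorenz-dominant vector satisfies $\beta^{\sigma'}\lord\beta^{\sigma^\circ}$. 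Combining (i) and (ii) and using antisymmetry of $\lord$ on sorted vectors (two-way weak majorization makes all partial sums, hence all ordered coordinates, coincide), the vectors $\beta^{\sigma^\circ}$ and $\beta^{\sigma'}$ have the same sorted profile. Therefore each is Lorenz-dominant under \emph{both} profiles, which gives the membership statements $\sigma'\in\Sigma(A(t_j),A(t_{-j}))$ and $\sigma^\circ\in\Sigma(B,A(t_{-j}))$.

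Finally I would invoke the tie-break to close the loop. Because $\sigma^\circ=\max_\rhd\Sigma(A(t_j),A(t_{-j}))$ and $\sigma'$ lies in this set with $\sigma'\neq\sigma^\circ$ (they differ at $t_j$), we get $\sigma^\circ\rhd\sigma'$; symmetrically, $\sigma'=\max_\rhd\Sigma(B,A(t_{-j}))$ and $\sigma^\circ$ lies in that set, so $\sigma'\rhd\sigma^\circ$. This contradicts $\rhd$ being a strict (antisymmetric) order, which refutes the contradiction hypothesis and establishes the desired implication.

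The part that looks like the main obstacle is precisely the tie-breaking: because the integral Lorenz-dominant deficit vector is not unique and the set of transferred teachers genuinely varies across Lorenz-dominant transfers (as Example~$1$ shows), one cannot simply appeal to a rural-hospitals-style invariance of the matched set. The plan above dissolves this difficulty by exploiting the symmetry between the two profiles rather than by any network case analysis: ``staying home'' and ``moving to a truly acceptable school'' are feasible across exactly the two profiles in question, which pins both Lorenz-dominant vectors to a common value, and the single consistent order $\rhd$ then supplies the antisymmetry needed for the contradiction. The one point to state carefully is the antisymmetry of $\lord$ and the fact (already established in the preceding sections) that $\Sigma(\cdot)$ is defined via Lorenz-dominance of the sorted deficit vector, so membership in $\Sigma$ is determined by the sorted profile alone.
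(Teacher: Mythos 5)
Your proposal is correct and follows essentially the same route as the paper's proof: the two cross-profile feasibility observations (the misreport outcome is feasible under truth since it sends $t_j$ to a school in $A(t_j)$, and the truthful outcome is feasible under the misreport since staying at $O(t_j)$ is always feasible), the resulting mutual Lorenz domination, and a contradiction via the fixed tie-breaking order $\rhd$. Your write-up merely makes explicit two steps the paper leaves implicit --- antisymmetry of $\lord$ on sorted vectors and the symmetric $\sigma^\circ \rhd \sigma'$, $\sigma' \rhd \sigma^\circ$ clash --- which is a faithful elaboration rather than a different argument.
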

\begin{proof}
        Pick $t_j \in T$ and a profile $A$ such that {\sc ldt}$_\rhd(t_j,A) = O(t_j)$. Let $B \subseteq D$ be any non-empty subset of deficit schools. Assume for contradiction that {\sc ldt}$_\rhd(t_j,(B,A(t_{-j}))) \in A(t_j)$. Since the acceptable sets of other teachers do not change from profile $A$ to profile $(B,A(t_{-j}))$ and {\sc ldt}$_\rhd(t_j,B,A(t_{-j})) \in A(t_j)$, it must be the case that {\sc ldt}$_\rhd(B,A(t_{-j}))$ is a feasible transfer in profile $A$. Similarly, since {\sc ldt}$_\rhd(t_j,A) = O(t_j)$, transfer {\sc ldt}$_\rhd(t_j,A)$ is a feasible transfer in profile $(B,A(t_{-j}))$. Hence, the deficit vectors generated by {\sc ldt}$_\rhd(t_j,(B,A(t_{-j})))$ and {\sc ldt}$_\rhd(t_j,A)$ Lorenz dominate each other, implying that both are Lorenz dominant transfers at both profiles. But, according to the
        definition  of {\sc ldt}$_\rhd$  we cannot pick {\sc ldt}$_\rhd(t_j,B,A(t_{-j}))$ at $(B,A(t_{-j}))$ when {\sc ldt}$_\rhd(t_j,A)$ is available. We have arrived at a contradiction.
\end{proof}

\begin{remark} \rm There is a more general way of making strategy-proof selections from the set $\Sigma(A)$. One could use a {\it choice function} that picks a transfer from every feasible subset of transfers. If the choice function satisfies the property of  {\sl contraction consistency}, (see \citet{Osbrub94}) an argument similar to that of the proof of Proposition \ref{prop:ic} can be used to show that the mechanism defined by $\Sigma(A)$ and the choice function, is strategy-proof.~\footnote{The use of choice function to make strategy-proof selections appears in \cite{NSSY22}. The greater generality of this approach over the fixed tie-breaking one, leads to a characterization result in their model (in conjunction with other axioms).}
\end{remark}

\section{Extensions}
\label{sec:ext}

In this section, we explore two extensions of our basic model.

\subsection{Allowing for transfers from surplus schools to surplus schools}

In our analysis so far, a teacher can only move from a 
surplus school (her initial assignment) to a deficit school. 
In this section, we relax this assumption. 
We consider an extension of our earlier model where a teacher can move from a surplus school (her initial assignment)
to either a deficit school or a surplus school. Each teacher $t_i$ has a set of acceptable 
schools $A(t_i)\subseteq D\cup S\setminus \{O(t_i)\}$. This set consists of all 
deficit and surplus schools that she would like to be transferred to. 
The other assumptions remain unchanged, i.e. no surplus school can 
become a deficit school post-transfer and no deficit school can become a surplus school 
post-transfer. Our goal is to show that our earlier analysis can be easily modified to 
cover the extended model.

We begin by describing the network flow formulation for the extended model.
This requires making some minor modifications to the original network described in 
Section \ref{sec:network}. In particular, the set of nodes 
and the edges in the original network remain as before. However,
some  ``backward  edges" are added to the original network.

Recall that the set of nodes $N$ in the graph consists of source $N_0$,
the set of surplus schools $S$, the set of teachers $T$ and 
the set of deficit schools $D$, where $D$ is the set of sinks. 
A directed capacity graph $G$ with edges $E$ is constructed as 
follows. There is an edge between the source node $N_0$ and 
each surplus school $s_j\in S$. There is an edge between each surplus school $s_j$ 
and each teacher $t_i$ such that $O(t_i)=s_j$. There is an edge from teacher $t_i$ to every acceptable school $A(t_i)$.
The capacity of an edge $(N_0,s_j)$ is $\alpha_j$. 
The capacity of an edge $(s_j,t_i)$, where $O(t_i) =s_j$, is $1$. 
Similarly the capacity of an edge $(t_i,d_k)$ where $d_k\in A(t_i)$, is $1$.
The capacity of an edge $(t_i,s_j)$ where $s_j\in A(t_i)$, is $1$. Each 
deficit school $d_k$ has a node capacity 
$\beta_k$. We let $G'$ denote this network. Notice the modified network is exactly like the earlier network $G$ except that the definition of acceptable schools has changed.

\begin{figure}[!hbt]
	\centering
	\includegraphics[width=3in]{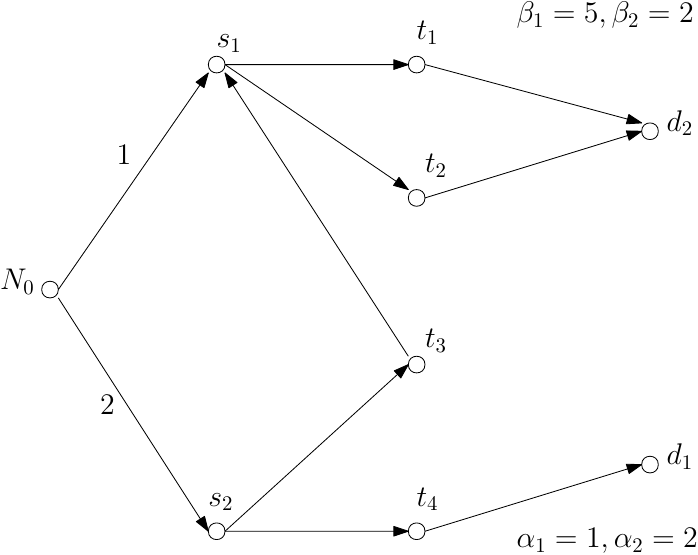}
	\caption{An example where teachers can have surplus schools acceptable}
	\label{fig:reform}
\end{figure}

We give an example to illustrate the  modified network. In this example, the sets of surplus schools, deficit schools and teachers are 
$S=\{s_1,s_2\}$, $D=\{d_1,d_2\}$ and  $T=\{t_1,t_2,t_3,t_4\}$ respectively. 
The initial assignment of teachers is given by $O(t_1)=O(t_2)=s_1$ 
and $O(t_3)=O(t_4)=s_2$. The acceptable sets are as follows: 
$A(t_1)=A(t_2)=\{d_2\}$, $A(t_3)=\{s_1\}$ and $A(t_4)=\{d_1\}$. 
The current deficit of schools $d_1$ and $d_2$ are $\beta_1=5$ and 
$\beta_2=2$. The surplus of schools $s_1$ and $s_2$ 
are $\alpha_1=1$ and $\alpha_2=2$. The modified network is shown in Figure \ref{fig:reform}.

In the original model where teachers can only be transferred to 
deficit schools, teacher $t_3$ is {\sl not} included in the network as 
she does not find any deficit school acceptable. The Lorenz dominant 
deficit vector is $(4,1)$ achieved by transferring one teacher, say 
$t_1$ from $s_1$ to $d_2$ and teacher $t_4$ from $s_2$ to $d_1$. 

In the extended model, teachers can be transferred either to deficit schools
or to surplus schools provided they are acceptable. Teacher $t_3$ is included in the network
with a forward edge $(s_2,t_3)$ and a backward edge $(t_3,s_1)$ respectively 
(see Figure \ref{fig:reform}). The Lorenz dominant deficit vector 
is $(3,1)$. This can be achieved by the following transfer: transfer $t_4$ to deficit 
school $d_1$, transfer $t_3$ to surplus school $s_1$ and 
transfer $t_1,t_2$ to deficit school $d_2$. It is now 
possible to transfer both teachers $t_1,t_2$ from 
surplus school $s_1$ since teacher $t_3$ is transferred 
to surplus school $s_1$. This ensures that surplus school 
$s_1$ does not become a deficit school post-transfer. 

A crucial observation is that the achievability result in \citet{megiddo1974optimal} continues to hold for the new network $G'$. As a result, we can define an integer- valued, convex cooperative game $w$ for the new network as well. The ``reduced form'' of the extended problem is identical to that of the original problem. 
Consequently, Theorem \ref{theo:main} and Proposition \ref{prop:ic} continue to hold for the extended model and the proofs remain unchanged. 

\subsection{Teacher specializations}

In the model so far, we have assumed that all teachers are homogeneous in their characteristics. 
In this extension, we consider a model where teachers have types depending on the subjects they can teach. 
If every teacher is qualified to teach only a single subject, a separate network can be constructed for each subject
and our earlier analysis applies to each subject. In this subsection, we consider
the case where a teacher may be qualified to teach several subjects but 
teaches only one subject in her assigned school. 
The goal of the planner is to transfer teachers respecting subject-wise surpluses and deficits, to equalize deficits across (school, subject) tuples. 
We show how our earlier analysis can be modified
to cover this case.

To simplify notation, we assume that there are only two subjects: Chemistry ($C$) and Physics ($P$). The set of teachers is $T$
and each  $t_i\in T$ has a type $\theta(t_i)$ that indicates which subjects she is qualified to teach. 
Here $\theta(t_i)\in \{C,P,CP\}$ where $C$ (or $P$) signifies that $t_i$ is qualified to teach 
only Chemistry (or only Physics) while $CP$ signifies that $t_i$ is qualified to teach both subjects.

There are three types of schools: {\it pure surplus schools} $S$ 
that have surplus teachers in both  subjects,  {\sl pure deficit schools} $D$ that have deficit of teachers in both subjects and {\sl mixed schools} $M$ that have a surplus in one subject and a deficit in the other.
For each $s_j \in S$,  $\alpha^C_j$ and $\alpha^P_j$ denote the surpluses of $s_j$ in $C$ and $P$ respectively. For each $d_k\in D$ , $\beta^C_k$ and $\beta^P_k$
denotes the deficits of $d_k$ in $C$ and $P$ by  respectively. For all $m_{\ell} \in M$,  $\alpha^x_{\ell}$ denotes the surplus of $m_{\ell}$ in subject $x$
and $\beta^y_{\ell}$ is the deficit of $m_{\ell}$ in subject $y$. We assume all surpluses and deficits are strictly positive integers.

Each $t_i \in T$ belongs to a unique school and teaches exactly one of the subjects she is qualified to teach in the school. Let $O(t_i)$ and $a(t_i)$ denote the school $t_i$ belongs to  and the subject she teaches there, respectively. 
We make some natural assumptions regarding the deployment of teacher $t_i$ of type $CP$. If $O(t_i) = d_k \in D$, then $t_i$ teaches the subject with the higher deficit in $d_k$; if $O(t_i) = m_{\ell} \in M$,
then $t_i$ teaches the subject for which $m_{\ell}$ has a deficit. This assumption ensures that deficits cannot be equalized by transfers {\it within} the same school.

In keeping with our earlier model, we assume that all teachers in $S$ can be transferred while no teacher in $D$ can be transferred. In school $m_{\ell} \in M$, the only
teachers who can be transferred are those of type $x$ where $\alpha^x_{\ell} >0$.\footnote{Recall that all teachers of type $CP$ in $m_{\ell}$ are teaching subject $y$ where $\beta^y_{\ell} >0$.}
We assume (i) no school with a surplus in a subject can have a deficit in that subject post transfer (ii)
no school with a deficit in a subject can have a surplus in that subject post transfer and (iii)
no surplus can grow larger, and no deficit can grow larger post transfer. 

Let $T'$ be the set of teachers who can be transferred. 
Each $t_i \in T'$  has a set of alternative schools that are acceptable to them:  $A(t_i)\subseteq D\cup M$. 
A network $G''$ is defined as follows.  

\begin{itemize}
\item The set of nodes in the graph consists of:

\begin{enumerate}
\item A source node $N_0$.
\item Two nodes for each $s_j\in S$: $(s_j,P)$ and $(s_j,C)$.
\item Two nodes for each $d_k\in D$: $(d_k,P)$ and $(d_k,C)$.
\item Two nodes for each  $m_{\ell}\in M$ that has a surplus in subject $x$ and 
deficit in subject $y$ (where $x,y$ are distinct and $x,y\in \{C,P\}$): $(m_{\ell},x)$ and $(m_{\ell},y)$.
\item A node for every teacher in $T'$.
\end{enumerate}

The nodes for pure deficit schools and the deficit ``subject'' nodes for mixed schools are
{\it sink} nodes. 

\item The set of edges $E$ and their capacities are as follows: 

\begin{enumerate}
\item Two directed edges $(N_0,(s_j,P))$ and $(N_0,(s_j,C))$ for every pure surplus school $s_j\in S$ with 
capacities $\alpha^C_j$ and $\alpha^P_j$ respectively.
\item A directed edge $(N_0,(m_{\ell},x))$ (where $x\in \{C,P\}$ is the subject in which school $m_{\ell}$ has a surplus) 
for every $m_{\ell}\in S$ with capacity $\alpha^x_{\ell}$.
\item For every $t_i\in T'$ with $O(t_i)=s_j\in S$, one of the following holds: a directed 
edge $((s_j,P),t_i))$ with capacity $1$ if $\theta(t_i)=P$, or a directed edge $((s_j,C),t_i))$ with 
capacity $1$ if $\theta(t_i)=C$, or a directed edge $((s_j,a(t_i)),t_i)$ with capacity $1$ if 
$\theta(t_i)=CP$ and $t_i$ teachers $a(t_i)$ in school $s_j$. 
\item For every $t_i\in T'$ with $O(t_i)=m_{\ell}\in M$ (where $m_{\ell}$ has a surplus in subject $x$),
we have a directed edge $((m_{\ell},x),t_i)$ with capacity $1$. 
\item For every $t_i\in T'$ with $O(t_i)\in S$, we have 
\begin{itemize}
\item for every pure deficit school $d_k\in A(t_i)\cap D$, one of the following holds:
a directed edge $(t_i,(d_k,C))$ with capacity $1$ if $\theta(t_i)=C$, or a directed edge $(t_i,(d_k,P))$ with capacity $1$ 
if $\theta(t_i)=P$, or two directed edges $(t_i,(d_k,C))$ and $(t_i,(d_k,P))$ with capacities $1$ if $\theta(t_i)=CP$.
\item for every mixed school $m_{\ell}\in A(t_i)\cap D$ with deficit in subject $y$ (sink node $(m_{\ell},y)$), a directed 
edge $(t_i,(m_{\ell},y))$ with capacity $1$ if either $\theta(t_i)=y$ or $\theta(t_i)=CP$.
\end{itemize}
\item For every $t_i\in T'$ with $O(t_i)=m_{\ell}\in M$ and mixed school $m_{\ell}$ has a surplus in subject $x$, we have a 
directed edge $(t_i,(\gamma,x))$ with capacity $1$ for every $\gamma\in A(t_i)$.\footnote{Since teacher $t_i$ 
belongs to a mixed school with surplus in subject $x$ and is included in the transfer process, we know 
by definition that type of teacher $t_i$ is $x$. Thus there is a directed edge between $t_i$ and sink 
node $(\gamma,x)$ for every school $\gamma$ in her acceptable set.}
\end{enumerate}

\item For every pure deficit school $d_k\in D$, the two associated sink nodes $(d_k,C)$ and $(d_k,P)$ 
have capacities $\beta_k^C$ and $\beta_k^P$ respectively.

\item For every mixed school $m_{\ell}\in M$ with a deficit in subject $y\in \{C,P\}$, the associated sink node 
$(m_{\ell},y)$ has a capacity $\beta_{\ell}^y$. 
\end{itemize}

Like our earlier analysis, 
we are interested in teacher reassignments that would result 
in a Lorenz dominant post-transfer deficit vector. 

It is routine to check that the achievability result in \citet{megiddo1974optimal} holds 
for the network $G''$. As before, we can define an integer-valued convex cooperative game for 
this new network as well. Theorem \ref{theo:main} and Proposition \ref{prop:ic} applied to this 
network establishes the existence of a Lorenz dominant integer transfer, which can be 
found using the MDR algorithm discussed earlier. 

\section{Other related literature} 
\label{sec:othrel}

Our paper is partly inspired by empirical work on teacher transfers in the state of Haryana, especially
 \citet{agarwal2018redistributing} and \citet{sharan2022teachers}. These papers look at data in Haryana and show the improvement of deficits in deficit schools as a result of transfers. \citet{agarwal2018redistributing} show that even local transfers (transfers up to a certain distance from the currently assigned school) leads to significant improvements in deficits. Our work is an attempt to address the issue of deficit reduction from a theoretical perspective. 

There is a recent theoretical literature on teacher transfers that is motivated by concerns different from ours. We comment briefly on some of these papers.

\citet{CTT22} study a model of teacher reassignment in the French school system. They consider a two-sided model where teachers and schools have preferences  and characterize reassignments that are ``maximally efficient and fair''. Unlike our model where teachers have to be transferred from surplus to deficit schools, a teacher in their model can be transferred from a school only if another teacher is transferred to that school. The algorithm they propose is a variant of the top-trading cycle that does not change the initial distribution of teachers. Indeed, improvements in the distribution of teachers is not a concern in the paper.

\citet{CDTTU25} study an extension of \citet{CTT22}  where each teacher has a type (from an ordered set of types) and each school is initially assigned a set of teachers. This creates a distribution of types at each school. They propose a family  of  indices to measure inequality of ``teacher quality'' across schools. They define a generalization of the top-trading cycle mechanism which is strategy-proof, ``constrained" Pareto efficient, and improves inequality from the status-quo in large markets. As in \citet{CTT22},
the initial distribution of teachers across schools remains unchanged post transfers although the distribution of teacher types in each schools can change.

\citet{HKYY25} study a similar model in the two-sided matching market and characterize different choice rules used by one side of the market that results in different distributional outcomes. In a related paper, \citet{HKY25} show that improving distributional objectives (for a family of such objectives) is incompatible with strategy-proofness, individual rationality and constrained efficiency. They introduce a new distributional objective called, ``pseudo $M^\natural$-concavity" (generalizing a well-known notion of discrete concavity called $M^\natural$-concavity), that is compatible with these axioms. They also propose a mechanism that satisfies these properties. \citet{EMY25} compare diversity in a school choice model using majorization technqiues. They axiomatically characterize choice rules that are consistent with a modified majorization notion which they introduce for school choice problems. In the standard allocation model with unit demand, \cite{ICDR} show that a particular variant of the top trading cycle Lorenz dominates the traditional top trading cycle, in terms of agent utilities. 

\cite{D19} consider a model where students have  already been admitted to a set of colleges. Students have to be transferred across colleges with constraints on ``imports" and ``exports" . In particular, they require that these exchanges are {\sl balanced}, implying that exports equal imports at each college. Students have strict preferences over colleges and colleges have priorities over students. They introduce the ``two-sided top-trading-cycles" mechanism and show that it is the unique mechanism that is priority respecting, balanced-efficient, student-strategy-proof, acceptable and individually rational. It is clear that our model is different from theirs. Importantly,  there are also no balancing constraints in our model.

There is a recent literature on matching with constraints: see \cite{K15,K24} and references therein. This literature considers the two-sided matching (school choice) problem, where teachers have preferences (strict orders) over schools and schools have priorities over students. In addition to these, there are distributional constraints, which are modelled as restrictions on subsets of students that a school can admit. The objectives in these papers vary: it is {\sl fairness} (no justified envy) in \cite{K24}; while it is efficiency in \cite{K15}. In contrast, our objective is to achieve a Lorenz dominant post-transfer deficit vector assuming teachers preferences are restricted. We note that our restrictions on transfers from surplus to deficit schools cannot be modelled as constraints on matching as in these papers.

\cite{A24} considers the school choice problem with the option of redistributing additional seats. In their model, schools can redistribute seats between themselves to improve efficiency based on the reported preferences of the students and given the priorities of schools. They introduce a simple class of algorithms that characterizes the set of efficient matchings in their model.

We model teacher preferences as being either dichotomous or trichotomous.  Preferences of this nature appear frequently in the matching literature. We have already noted their use in 
\citet{BM04}.  \cite{M21} analyze an ``object reallocation" model where agents are endowed with bundles of objects and their preferences over bundles of objects satisfy a trichotomous structure. They define a class of individually rational, Pareto-efficient, and strategy-proof mechanisms for this model.  \citet{ACEE21} study a model of exchange of goods, where each agent can receive goods from other agents. The preference of an agent over the bundles of goods received are dichotomous in a particular way: each agent partitions the set of agents as acceptable and unacceptable, and bundles of goods from acceptable agents (satisfying some upper bound on quantity) are preferred over other bundles. They define a class of mechanisms called priority mechanisms and show them to be strategy-proof. \citet{ASY19} study a model where agents need to be matched in pairs to complete a project. Each agent has preferences over partner, project pairs that are separable. Marginal preferences over partners and projects are dichotomous. In other words, each agent has an acceptable set of partners and an acceptable set of projects. They propose a mechanism in this setting and show that it is weakly stable and strategy-proof. We note that none of these papers are concerned with the issues that we are.


\begin{thebibliography}{31}
\newcommand{\enquote}[1]{``#1''}
\expandafter\ifx\csname natexlab\endcsname\relax\def\natexlab#1{#1}\fi

\bibitem[\protect\citeauthoryear{Afacan, Dur, and Van~der Linden}{Afacan
  et~al.}{2024}]{A24}
\textsc{Afacan, M.~O., U.~Dur, and M.~Van~der Linden} (2024): \enquote{Capacity
  design in school choice,} \emph{Games and Economic Behavior}, 146, 277--291.

\bibitem[\protect\citeauthoryear{Agarwal, Kayina, Mukhopadhyay, and
  Reddy}{Agarwal et~al.}{2018}]{agarwal2018redistributing}
\textsc{Agarwal, S., A.~Kayina, A.~Mukhopadhyay, and A.~N. Reddy} (2018):
  \enquote{Redistributing teachers using local transfers,} \emph{World
  Development}, 110, 333--344.

\bibitem[\protect\citeauthoryear{Ahuja, Magnanti, and Orlin}{Ahuja
  et~al.}{1988}]{ahuja1988network}
\textsc{Ahuja, R.~K., T.~L. Magnanti, and J.~B. Orlin} (1988): \emph{Network
  flows}, Prentice Hall.

\bibitem[\protect\citeauthoryear{Andersson, Cseh, Ehlers, and
  Erlanson}{Andersson et~al.}{2021}]{ACEE21}
\textsc{Andersson, T., A.~Cseh, L.~Ehlers, and A.~Erlanson} (2021):
  \enquote{Organizing time exchanges: Lessons from matching markets,}
  \emph{American Economic Journal: Microeconomics}, 13, 338--373.

\bibitem[\protect\citeauthoryear{Bochet, Ilk{\i}l{\i}{\c{c}}, Moulin, and
  Sethuraman}{Bochet et~al.}{2012}]{bochet2012balancing}
\textsc{Bochet, O., R.~Ilk{\i}l{\i}{\c{c}}, H.~Moulin, and J.~Sethuraman}
  (2012): \enquote{Balancing supply and demand under bilateral constraints,}
  \emph{Theoretical Economics}, 7, 395--423.

\bibitem[\protect\citeauthoryear{Bogomolnaia and Moulin}{Bogomolnaia and
  Moulin}{2004}]{BM04}
\textsc{Bogomolnaia, A. and H.~Moulin} (2004): \enquote{Random matching under
  dichotomous preferences,} \emph{Econometrica}, 72, 257--279.

\bibitem[\protect\citeauthoryear{Combe, Dur, Tercieux, Terrier, and
  Unver}{Combe et~al.}{2025}]{CDTTU25}
\textsc{Combe, J., U.~Dur, O.~Tercieux, C.~Terrier, and U.~Unver} (2025):
  \enquote{Market design for redistributional objectives in (re)assignment: An
  application to improve the distribution of teachers in schools,} Working
  Paper.

\bibitem[\protect\citeauthoryear{Combe, Tercieux, and Terrier}{Combe
  et~al.}{2022}]{CTT22}
\textsc{Combe, J., O.~Tercieux, and C.~Terrier} (2022): \enquote{The design of
  teacher assignment: Theory and evidence,} \emph{Review of Economic Studies}.

\bibitem[\protect\citeauthoryear{Dur and {\"U}nver}{Dur and
  {\"U}nver}{2019}]{D19}
\textsc{Dur, U.~M. and M.~U. {\"U}nver} (2019): \enquote{Two-sided matching via
  balanced exchange,} \emph{Journal of Political Economy}, 127, 1156--1177.

\bibitem[\protect\citeauthoryear{Dutta and Ray}{Dutta and
  Ray}{1989}]{dutta1989concept}
\textsc{Dutta, B. and D.~Ray} (1989): \enquote{A concept of egalitarianism
  under participation constraints,} \emph{Econometrica}, 615--635.

\bibitem[\protect\citeauthoryear{Echenique, Mekonnen, and Yenmez}{Echenique
  et~al.}{2025}]{EMY25}
\textsc{Echenique, F., T.~Mekonnen, and B.~M. Yenmez} (2025):
  \enquote{Diversity in Choice as Majorization,} Working Paper.

\bibitem[\protect\citeauthoryear{Edmonds and Karp}{Edmonds and
  Karp}{1972}]{EK72}
\textsc{Edmonds, J. and R.~M. Karp} (1972): \enquote{Theoretical Improvements
  in Algorithmic Efficiency for Network Flow Problems,} \emph{J. ACM}, 19,
  248?264.

\bibitem[\protect\citeauthoryear{Gallo, Grigoriadis, and Tarjan}{Gallo
  et~al.}{1989}]{GalloGT89}
\textsc{Gallo, G., M.~D. Grigoriadis, and R.~E. Tarjan} (1989): \enquote{A Fast
  Parametric Maximum Flow Algorithm and Applications,} \emph{{SIAM} J.
  Comput.}, 18, 30--55.

\bibitem[\protect\citeauthoryear{{Government of Haryana}}{{Government of
  Haryana}}{2016}]{TTP16}
\textsc{{Government of Haryana}} (2016): \enquote{Teacher Transfer Policy
  2016,}
  \url{https://cdnbbsr.s3waas.gov.in/s3cf2226ddd41b1a2d0ae51dab54d32c36/uploads/2021/05/2021053128.pdf},
  accessed: 2024-15-08.

\bibitem[\protect\citeauthoryear{{Government of Haryana}}{{Government of
  Haryana}}{2023}]{TTP23}
---\hspace{-.1pt}---\hspace{-.1pt}--- (2023): \enquote{Teacher Transfer Policy
  2023,}
  \url{https://cdnbbsr.s3waas.gov.in/s3cf2226ddd41b1a2d0ae51dab54d32c36/uploads/2023/08/20230828327116533.pdf},
  accessed: 2024-15-08.

\bibitem[\protect\citeauthoryear{Hafalir, Kojima, and Yenmez}{Hafalir
  et~al.}{2025{\natexlab{a}}}]{HKY25}
\textsc{Hafalir, I., F.~Kojima, and B.~M. Yenmez} (2025{\natexlab{a}}):
  \enquote{Market Design wtih Distributional Objectives: Efficiency,
  Incentives, and Property Rights,} Working Paper.

\bibitem[\protect\citeauthoryear{Hafalir, Kojima, Yenmez, and Yokote}{Hafalir
  et~al.}{2025{\natexlab{b}}}]{HKYY25}
\textsc{Hafalir, I., F.~Kojima, B.~M. Yenmez, and K.~Yokote}
  (2025{\natexlab{b}}): \enquote{Market Design wtih Distributional Objectives,}
  Working Paper.

\bibitem[\protect\citeauthoryear{Hardy, Littlewood, and Po\'lya}{Hardy
  et~al.}{1929}]{HLP29}
\textsc{Hardy, G.~H., J.~E. Littlewood, and G.~Po\'lya} (1929): \enquote{Some
  simple inequalities satisfied by convex functions,} \emph{Messenger Math.},
  58, 145--152.

\bibitem[\protect\citeauthoryear{Hardy, Littlewood, and P{\'o}lya}{Hardy
  et~al.}{1952}]{Hardy52}
\textsc{Hardy, G.~H., J.~E. Littlewood, and G.~P{\'o}lya} (1952):
  \emph{Inequalities}, Cambridge university press.

\bibitem[\protect\citeauthoryear{Kamada and Kojima}{Kamada and
  Kojima}{2015}]{K15}
\textsc{Kamada, Y. and F.~Kojima} (2015): \enquote{Efficient matching under
  distributional constraints: Theory and applications,} \emph{American Economic
  Review}, 105, 67--99.

\bibitem[\protect\citeauthoryear{Kamada and Kojima}{Kamada and
  Kojima}{2024}]{K24}
---\hspace{-.1pt}---\hspace{-.1pt}--- (2024): \enquote{Fair matching under
  constraints: Theory and applications,} \emph{Review of Economic Studies}, 91,
  1162--1199.

\bibitem[\protect\citeauthoryear{Manjunath and Westkamp}{Manjunath and
  Westkamp}{2021}]{M21}
\textsc{Manjunath, V. and A.~Westkamp} (2021): \enquote{Strategy-proof exchange
  under trichotomous preferences,} \emph{Journal of Economic Theory}, 193,
  105197.

\bibitem[\protect\citeauthoryear{Marshall, Olkin, and Arnold}{Marshall
  et~al.}{1979}]{Marshall79}
\textsc{Marshall, A., I.~Olkin, and B.~C. Arnold} (1979): \emph{Inequalities:
  Theory of Majorization and Its Applications}, Academic Press, Inc.

\bibitem[\protect\citeauthoryear{Megiddo}{Megiddo}{1974}]{megiddo1974optimal}
\textsc{Megiddo, N.} (1974): \enquote{Optimal flows in networks with multiple
  sources and sinks,} \emph{Mathematical Programming}, 7, 97--107.

\bibitem[\protect\citeauthoryear{Megiddo}{Megiddo}{1977}]{Meg77}
---\hspace{-.1pt}---\hspace{-.1pt}--- (1977): \enquote{{A good algorithm for
  lexicographically optimal flows in multi-terminal networks},} \emph{Bulletin
  of the American Mathematical Society}, 83, 407 -- 409.

\bibitem[\protect\citeauthoryear{Nicol{\`o}, Salmaso, Sen, and
  Yadav}{Nicol{\`o} et~al.}{2022}]{NSSY22}
\textsc{Nicol{\`o}, A., P.~Salmaso, A.~Sen, and S.~Yadav} (2022):
  \enquote{Fostering collaborations in matching platforms,} Working Paper.

\bibitem[\protect\citeauthoryear{Nicolo, Sen, and Yadav}{Nicolo
  et~al.}{2019}]{ASY19}
\textsc{Nicolo, A., A.~Sen, and S.~Yadav} (2019): \enquote{Matching with
  partners and projects,} \emph{Journal of Economic Theory}, 184, 104942.

\bibitem[\protect\citeauthoryear{Osborne and Rubinstein}{Osborne and
  Rubinstein}{1994}]{Osbrub94}
\textsc{Osborne, M.~J. and A.~Rubinstein} (1994): \emph{A course in game
  theory}, MIT Press.

\bibitem[\protect\citeauthoryear{{Pycia and Unver}}{{Pycia and
  Unver}}{2014}]{ICDR}
\textsc{{Pycia and Unver}} (2014): \enquote{Incentive Compatible Allocation and
  Exchange of Discrete Resources,}
  \url{http://pycia.bol.ucla.edu/pycia-unver-tcbo.pdf}, accessed: 2025-21-10.

\bibitem[\protect\citeauthoryear{Shapley}{Shapley}{1971}]{S71}
\textsc{Shapley, L.~S.} (1971): \enquote{Cores of convex games,}
  \emph{International journal of game theory}, 1, 11--26.

\bibitem[\protect\citeauthoryear{Sharan, Jain, Vashistha, and Singh}{Sharan
  et~al.}{2022}]{sharan2022teachers}
\textsc{Sharan, A., N.~Jain, P.~Vashistha, and P.~Singh} (2022): \enquote{The
  Teachers? Transfer Policy, Haryana (2016): A Critical Evaluation,}
  \emph{Indian Journal of Public Administration}, 68, 202--218.

\end{thebibliography}

\appendix 

\section{Proof of Lorenz domination of the MDR algorithm}
\label{app:mdr}

\begin{theorem}[\citet{dutta1989concept}, \citet{megiddo1974optimal}]
    \label{theo:mdr}
    The MDR algorithm produces a Lorenz dominant transfer.
\end{theorem}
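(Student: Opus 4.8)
The plan is to show that the fractional vector $h^\star$ output by the MDR algorithm is achievable and Lorenz-dominates every achievable deficit vector; since achievability means membership in the relaxed core $\{h : h(B) \ge w(B) \text{ for all } B \subseteq D\}$ of the convex game $w$, this is the reduced-form content of the statement (matching the main text's phrasing that $h^\star$ is the Lorenz dominant deficit vector). The argument rests on three properties of $h^\star$, which I would establish as separate claims before combining them. Write $c_j := \frac{w(\overline{D}_j) - w(\overline{D}_{j-1})}{|D_j|}$ for the common value of $h^\star$ on block $D_j$, so that, by the greedy rule defining the partition, $c_j$ is the maximum average marginal contribution to $\overline{D}_{j-1}$ of a subset of $D \setminus \overline{D}_{j-1}$.

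First I would prove achievability (Claim \ref{cl:maxach}), i.e.\ $h^\star(B) \ge w(B)$ for every $B \subseteq D$. Writing $B_j := B \cap D_j$ and telescoping, $w(B) = \sum_j \big[w(B \cap \overline{D}_j) - w(B \cap \overline{D}_{j-1})\big]$. Supermodularity of $w$ (Proposition \ref{prop:megiddo}) yields increasing marginal returns, so each term is bounded above by the marginal contribution of $B_j$ to the larger base $\overline{D}_{j-1}$, namely $w(B_j \cup \overline{D}_{j-1}) - w(\overline{D}_{j-1})$; the greedy maximality defining $c_j$ bounds this in turn by $c_j|B_j|$. Summing gives $w(B) \le \sum_j c_j|B_j| = h^\star(B)$, and the same telescoping with $B = \overline{D}_j$ produces the binding equalities $h^\star(\overline{D}_j) = w(\overline{D}_j)$ of (\ref{eq:corebind}). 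Next I would prove the ordering $c_1 \ge c_2 \ge \cdots \ge c_{k^\star}$ (Claim \ref{cl:orderld}): if $c_{k+1} > c_k$ for some $k$, then $D_k \cup D_{k+1}$ would have average marginal contribution to $\overline{D}_{k-1}$ strictly exceeding $c_k$, contradicting the choice of $D_k$ as a maximizer. Consequently the sorted vector $[h^\star]$ simply lists the blocks in order $D_1, D_2, \ldots$, and the partial-sum function $\phi(k) := \sum_{i=1}^k h^\star_{[i]}$ is piecewise-affine and concave.

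Finally I would establish Lorenz dominance. Fix an achievable $h$ and set $\psi(k) := \sum_{i=1}^k h_{[i]}$; since $h_{[1]} \ge h_{[2]} \ge \cdots$, the function $\psi$ is concave. At every block boundary $k = |\overline{D}_j|$ we have $\phi(k) = h^\star(\overline{D}_j) = w(\overline{D}_j) \le h(\overline{D}_j) \le \psi(k)$, using the binding equality, achievability of $h$, and the fact that the $k$ largest coordinates sum to at least any total over $k$ coordinates. For an interior index $k$ strictly between consecutive boundaries $a < b$, $\phi$ is affine on $[a,b]$ while $\psi$ is concave and dominates $\phi$ at both $a$ and $b$; hence $\psi$ dominates the chord through $(a,\psi(a))$ and $(b,\psi(b))$, which in turn dominates the affine $\phi$ throughout $[a,b]$, giving $\phi(k) \le \psi(k)$. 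This is the single-crossing/convexity observation I would formalize as Claim \ref{cl:hlp} via \citet{HLP29}. The two cases together give $\sum_{i=1}^k h^\star_{[i]} \le \sum_{i=1}^k h_{[i]}$ for all $k$, i.e.\ $h^\star \lord h$. The main obstacle is the achievability step: it is the only place that genuinely combines supermodularity with the greedy averaging, and the crux is orienting the marginal-returns inequality correctly (the base $\overline{D}_{j-1}$ contains $B \cap \overline{D}_{j-1}$). Once $h^\star$ is known to lie in the relaxed core with the cumulative constraints binding, the dominance argument reduces to the routine concavity comparison above.
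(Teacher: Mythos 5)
Your proposal is correct and follows essentially the same route as the paper's appendix proof: the same three ingredients in the same roles, namely the ordering of the block values $c_1 \ge c_2 \ge \cdots$ (the paper's Claim \ref{cl:orderld}, which you prove in contrapositive via the mediant inequality rather than directly), achievability via the telescoping bound that combines supermodularity with the greedy maximality of each $D_j$ (Claim \ref{cl:maxach}, identical up to the direction in which the chain of inequalities is written), and the binding equalities $h^\star(\overline{D}_j) = w(\overline{D}_j)$. The only real difference is in the interior-index case of the dominance step: the paper bounds $\sum_{i=1}^k h^\star_{[i]}$ by $h(\overline{D}_{j-1}) + k'\,h(D_j)/|D_j|$ and then applies Claim \ref{cl:hlp} to $h$ restricted to $D_j$, whereas you interpolate between the two block-boundary inequalities using discrete concavity of the sorted partial sums of $h$ --- an equivalent chord argument that is marginally cleaner, since it works purely at the level of partial sums and never needs to identify which coordinates of $h$ lie in $D_j$.
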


\begin{remark} \rm Note that \citet{dutta1989concept} show that the MDR algorithm produces a deficit vector that Lorenz dominates every achievable deficit vector corresponding to maximum flow.\footnote{In their terminology, the provide a Lorenz dominant core point of a convex game. Since core constraints are binding for the grand coalition and the achievability constraints are not necessarily binding, we cannot directly apply their result.} It is not clear that if a such deficit vector will Lorenz dominate achievable deficit vectors that {\sl do not} correspond to a maximum flow. The current proof provides this minor extension. \citet{megiddo1974optimal} did not consider Lorenz domination but {\sl lex-optimality}.
\end{remark}

The proof of Theorem \ref{theo:mdr} uses the following claim.    
    \begin{claim}[\citet{HLP29}]
    \label{cl:hlp}
        Let $a \equiv (a_1,\ldots,a_{\ell})$ be a vector and $a^{{\rm avg}} := \frac{a_1+\ldots+a_{\ell}}{\ell}$. Then, $(a^{{\rm avg}},\ldots,a^{{\rm avg}})$ Lorenz dominates $a$.
    \end{claim}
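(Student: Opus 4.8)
The plan is to verify the defining inequalities (\ref{eq:major}) of Lorenz dominance directly, working entirely from the definition rather than invoking Fact \ref{fact:hlp} (which would be essentially circular, since the present claim is the elementary majorization fact that underlies that characterization). Write $c := (a^{{\rm avg}},\ldots,a^{{\rm avg}})$ for the constant vector. Since $c$ is already constant, its descending rearrangement is itself, so $\sum_{i=1}^k c_{[i]} = k\,a^{{\rm avg}}$ for every $k$. Hence, by (\ref{eq:major}), establishing $c \lord a$ reduces to proving
\begin{equation*}
k\, a^{{\rm avg}} \le \sum_{i=1}^k a_{[i]} \qquad \text{for all } k \in \{1,\ldots,\ell\}.
\end{equation*}
The case $k = \ell$ holds with equality by the definition of $a^{{\rm avg}}$, so it suffices to treat $1 \le k \le \ell - 1$.

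For the main step, I would show that the average of the $k$ largest entries of $a$ is at least the overall average. Abbreviating $S_k := \sum_{i=1}^k a_{[i]}$ and $S := S_\ell$, the key chain of inequalities is
\begin{equation*}
\frac{S_k}{k} \ge a_{[k]} \ge a_{[k+1]} \ge \frac{S - S_k}{\ell - k},
\end{equation*}
where the left inequality holds because each of the top $k$ entries is at least $a_{[k]}$, the right inequality holds because each of the bottom $\ell - k$ entries is at most $a_{[k+1]}$, and the middle inequality is just the sorting $a_{[k]} \ge a_{[k+1]}$. Cross-multiplying the outer relation $\frac{S_k}{k} \ge \frac{S - S_k}{\ell - k}$ (both denominators are positive in this range) and rearranging yields $\ell\, S_k \ge k\, S$, i.e. $S_k \ge \frac{k}{\ell} S = k\, a^{{\rm avg}}$, which is precisely the desired inequality.

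Since the argument is elementary, there is no serious obstacle; the only points requiring care are bookkeeping. First, I would double-check that I am matching the paper's convention in (\ref{eq:major}), under which $c \lord a$ compares partial sums of the \emph{descending} rearrangements, so that the constant vector—having the smallest possible top partial sums for a fixed total—is the dominant one. Second, I would confirm the boundary handling: the $k=\ell$ case as an equality, and the strict positivity of $\ell - k$ that legitimizes the cross-multiplication for $k \le \ell - 1$.
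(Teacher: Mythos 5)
Your proof is correct and takes essentially the same approach as the paper's: both directly verify the defining partial-sum inequalities $\sum_{i=1}^k a_{[i]} \ge k\,a^{{\rm avg}}$ for all $k$. The only cosmetic difference is that the paper splits into cases at the index where the sorted entries cross $a^{{\rm avg}}$ (handling the tail by complementation), whereas you derive the same inequality uniformly for each $k$ by cross-multiplying the mean comparison $\frac{S_k}{k} \ge a_{[k]} \ge a_{[k+1]} \ge \frac{S-S_k}{\ell-k}$ --- both arguments rest on the identical observation that the top-$k$ entries weakly exceed the remaining ones.
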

    We give a proof for completeness here though it is a straightforward consequence of Fact \ref{fact:hlp}, which is due to \citet{HLP29}.
    
    \begin{proof}
        Without loss of generality, assume $a_1 \ge \ldots \ge a_{\ell}$. Pick $k \in \{1,\ldots,\ell\}$. Suppose $k$ is such that $a_{k-1} \ge a^{{\rm avg}} \ge a_k$. Clearly, for any $j \le k-1$, 
        $\sum_{i=1}^j a_i \ge j a^{{\rm avg}}.$ 
        Now, for any $j \ge k$, we see that 
        $$\sum_{i=1}^j a_i = \ell a^{{\rm avg}} - \sum_{i=j+1}^{\ell} a_i \ge \ell a^{{\rm avg}} - (\ell-j)a^{{\rm avg}} = j a^{{\rm avg}},$$ 
        where the inequality follows because $a^{{\rm avg}} \ge a_i$ for all $i \ge k$. This completes the proof.
    \end{proof}

\noindent {\sc Proof of Theorem \ref{theo:mdr}:}

\begin{proof}
We prove a series of claims. First, we show how $h^\star$ is ordered. Clearly, if $d_i,d_j \in D_k$ for some $k$, then $h^\star_i=h^\star_j$.
\begin{claim}
    \label{cl:orderld}
    Suppose $d_i \in D_k$ and $d_j \in D_{k+1}$. Then $h^\star_i \ge h^\star_j$.
\end{claim}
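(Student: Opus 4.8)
The plan is to reduce the claim to a monotonicity statement about the \emph{average marginal contributions} generated by the greedy algorithm, and then exploit the maximality built into the $\arg\max$ of a single step. Writing $\lambda_k := \frac{w(\overline{D}_k) - w(\overline{D}_{k-1})}{|D_k|}$, Equation (\ref{eq:mdrdef}) says exactly that $h^\star$ is constant on each block, with $h^\star_i = \lambda_k$ for every $d_i \in D_k$. Hence the claim ``$h^\star_i \ge h^\star_j$ for $d_i \in D_k$ and $d_j \in D_{k+1}$'' is equivalent to $\lambda_k \ge \lambda_{k+1}$, and it suffices to prove this inequality.

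The key idea is to test the greedy step that produced $D_k$ against the single competitor $D_k \cup D_{k+1}$. Since $D_k$ and $D_{k+1}$ are both disjoint from $\overline{D}_{k-1}$, the set $D_k \cup D_{k+1}$ lies in $D \setminus \overline{D}_{k-1}$ and is therefore an admissible candidate in the maximization defining $D_k$. Its average marginal contribution to $\overline{D}_{k-1}$ telescopes, using $(D_k \cup D_{k+1}) \cup \overline{D}_{k-1} = \overline{D}_{k+1}$:
\begin{align*}
\frac{w(\overline{D}_{k+1}) - w(\overline{D}_{k-1})}{|D_k| + |D_{k+1}|}
= \frac{\big(w(\overline{D}_k) - w(\overline{D}_{k-1})\big) + \big(w(\overline{D}_{k+1}) - w(\overline{D}_k)\big)}{|D_k| + |D_{k+1}|}
= \frac{\lambda_k |D_k| + \lambda_{k+1} |D_{k+1}|}{|D_k| + |D_{k+1}|},
\end{align*}
which is a convex combination of $\lambda_k$ and $\lambda_{k+1}$.

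To finish, I invoke the definition of $D_k$ as a maximizer of the average marginal contribution to $\overline{D}_{k-1}$ over all $B \subseteq D \setminus \overline{D}_{k-1}$; applying this with $B = D_k \cup D_{k+1}$ gives
\begin{align*}
\lambda_k = \frac{w(\overline{D}_k) - w(\overline{D}_{k-1})}{|D_k|} \ge \frac{\lambda_k |D_k| + \lambda_{k+1} |D_{k+1}|}{|D_k| + |D_{k+1}|}.
\end{align*}
Multiplying through by $|D_k| + |D_{k+1}|$ and cancelling the common $\lambda_k |D_k|$ term leaves $\lambda_k |D_{k+1}| \ge \lambda_{k+1} |D_{k+1}|$, and since $|D_{k+1}| > 0$ we conclude $\lambda_k \ge \lambda_{k+1}$, as required. (For $k=1$ the argument applies verbatim with $\overline{D}_0 = \emptyset$ and $w(\emptyset)=0$.) I expect no serious obstacle here: the only non-routine move is recognizing that the two greedy steps optimize over \emph{different} ground sets ($D \setminus \overline{D}_{k-1}$ versus $D \setminus \overline{D}_k$) and so should not be compared directly, but instead linked through the single combined competitor $D_k \cup D_{k+1}$ evaluated at step $k$. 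It is worth noting that convexity (supermodularity) of $w$ plays no role in this particular ordering property; only the greedy maximality structure of the partition is used.
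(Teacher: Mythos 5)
Your proof is correct and is essentially the paper's own argument: both exploit the maximality of $D_k$ at step $k$ of the MDR algorithm against the single admissible competitor $B = D_k \cup D_{k+1}$, then derive $h^\star_i \ge h^\star_j$ by elementary algebra (the paper via cross-multiplication, you via the equivalent observation that the competitor's average marginal contribution is a convex combination of the two block averages). Your closing remarks---that the two greedy steps optimize over different ground sets and that convexity of $w$ is not needed here---are both accurate and consistent with the paper's proof, which likewise makes no use of supermodularity for this claim.
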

\begin{proof}
    Consider Step $k$ of the MDR algorithm where $D_k$ is chosen as the argmax and $D_k\cup D_{k+1}$ is a feasible option. Then by definition,
    \begin{align*}
        \frac{w(\overline{D}_k) - w(\overline{D}_{k-1})}{|D_k|} &\ge \frac{w(\overline{D}_{k+1}) - w(\overline{D}_{k-1})}{|D_k|+|D_{k+1}|} \\
        \implies |D_{k+1}| \Big( w(\overline{D}_k) - w(\overline{D}_{k-1}) \Big) &\ge |D_k| \Big(w(\overline{D}_{k+1}) - w(\overline{D}_{k}) \Big) \\
        \implies h^\star_i = \frac{w(\overline{D}_k) - w(\overline{D}_{k-1})}{|D_k|} &\ge \frac{w(\overline{D}_{k+1}) - w(\overline{D}_{k})}{|D_{k+1}|} = h^\star_j
    \end{align*}
\end{proof}

The next claim says $h^\star$ is achievable.
\begin{claim}
    \label{cl:maxach}
    The deficit vector $h^\star$ is achievable.
\end{claim}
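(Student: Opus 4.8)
The plan is to verify achievability directly through the relaxed core characterization rather than by any fixed-point or flow argument. By Proposition~\ref{prop:megiddo}, $h^\star$ is achievable if and only if $h^\star(B) \ge w(B)$ for every $B \subseteq D$, i.e.\ $h^\star$ satisfies the relaxed core constraints~(\ref{eq:core}). So I would fix an arbitrary $B \subseteq D$ and aim to show $w(B) \le h^\star(B)$. Write $B_j := B \cap D_j$ for each block $D_j$ of the MDR partition, and let $g_j := \frac{w(\overline{D}_j)-w(\overline{D}_{j-1})}{|D_j|}$ denote the common value taken by $h^\star$ on $D_j$ (see~(\ref{eq:mdrdef})), so that $h^\star(B) = \sum_{j=1}^{k^\star} |B_j|\, g_j$.

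The key device is a telescoping decomposition of $w(B)$ along the partition. Writing $B^{<j} := \bigcup_{i<j} B_i$, I would expand $w(B) = \sum_{j=1}^{k^\star}\big[w(B^{<j}\cup B_j) - w(B^{<j})\big]$, so that the $j$-th summand is the marginal contribution of $B_j$ to $B^{<j}$. Since $B_i \subseteq D_i$ for each $i$, we have $B^{<j} \subseteq \overline{D}_{j-1}$, while $B_j$ is disjoint from both $B^{<j}$ and $\overline{D}_{j-1}$. Convexity (supermodularity) of $w$, in its set form, which follows from the single-element statement in Proposition~\ref{prop:megiddo} by adjoining the elements of $B_j$ one at a time, then gives that the marginal contribution of $B_j$ grows with the base set; hence $w(B^{<j}\cup B_j) - w(B^{<j}) \le w(\overline{D}_{j-1}\cup B_j) - w(\overline{D}_{j-1})$.

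It remains to bound each right-hand side by $|B_j|\, g_j$, and this is exactly where the defining maximality of the MDR step enters. Since $B_j \subseteq D_j \subseteq D \setminus \overline{D}_{j-1}$ and $D_j$ maximizes the average marginal contribution to $\overline{D}_{j-1}$, we have $\frac{w(\overline{D}_{j-1}\cup B_j) - w(\overline{D}_{j-1})}{|B_j|} \le \frac{w(\overline{D}_j)-w(\overline{D}_{j-1})}{|D_j|} = g_j$ whenever $B_j \ne \emptyset$ (and both sides vanish when $B_j = \emptyset$, since $w(\emptyset)=0$). Summing over $j$ yields $w(B) \le \sum_{j} |B_j|\, g_j = h^\star(B)$, which is the desired relaxed core inequality; as $B$ was arbitrary, $h^\star$ is achievable.

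I expect the only delicate points to be bookkeeping rather than substance: checking the nesting $B^{<j}\subseteq \overline{D}_{j-1}$ together with the disjointness needed to legitimately invoke supermodularity, and upgrading the single-element convexity of Proposition~\ref{prop:megiddo} to the set-valued marginal form. Notably, this argument does not require the ordering in Claim~\ref{cl:orderld}; it rests solely on supermodularity and the stepwise maximality built into the construction of the blocks $D_j$.
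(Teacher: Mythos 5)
Your proposal is correct and is essentially the paper's own proof read in the opposite direction: the paper bounds $h^\star(B)=\sum_j |B_j| g_j$ from below by the averaged MDR-maximality inequality (its step (a)) and then by supermodularity telescopes down to $w(B)$ (its step (b)), which is exactly your decomposition $w(B)=\sum_j\bigl[w(B^{<j}\cup B_j)-w(B^{<j})\bigr]$ with $B^{<j}\subseteq \overline{D}_{j-1}$. Your observations that set-valued supermodularity follows by adjoining elements one at a time and that Claim~\ref{cl:orderld} is not needed are both consistent with the paper's argument (which likewise does not invoke the ordering claim).
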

\begin{proof}
    By construction, $\sum_{d_i \in D}h^\star_i = w(D)$. Hence, pick any $B \subsetneq D$. Partition $B$ into $(B_1,\ldots,B_{k^\star})$ such that $B_1 \subseteq D_1, B_2 \subseteq D_2, \ldots, B_{k^\star} \subseteq D_{k^\star}$. Denote $|B_j| \equiv \ell_j$ for each $j \in \{1,\ldots,k^\star\}$ and note that some $\ell_j$ can be equal to zero also. Then,
    \begin{align*}
        \sum_{d_i \in B}h^\star_i = \sum_{j=1}^{k^\star} \sum_{d_i \in B_j} h^\star_i &= \sum_{j=1: \ell_j \ne 0}^{k^\star} \ell_j \frac{w(\overline{D}_j) - w(\overline{D}_{j-1})}{|D_j|} \\
        &\ge_{(a)} \sum_{j=1}^{k^\star} \ell_j \frac{w(\overline{D}_{j-1} \cup B_j) - w(\overline{D}_{j-1})}{\ell_j} \\
        &\ge_{(b)} \sum_{j=1}^{k^\star} \Big[ w(\overline{B}_j) - w(\overline{B}_{j-1})\Big] ~\qquad~\textrm{($\overline{B}_j=\cup_{i=1}^jB_i$ and $\overline{B}_{j-1}=\cup_{i=1}^{j-1}B_i$)}\\
        &= w(B)
    \end{align*}
    where the first inequality (a) is by construction of $D_j$:
    \begin{align*}
        \frac{w(\overline{D}_j) - w(\overline{D}_{j-1})}{|D_j|} \ge \frac{w(\overline{D}_{j-1} \cup B) - w(\overline{D}_{j-1})}{|B|} ~\qquad~\forall~B \subseteq D \setminus \overline{D}_{j-1}
    \end{align*}
    and the second inequality (b) is due to convexity.
    This shows that $h^\star$ is achievable.
\end{proof}

\begin{claim}
    \label{cl:contld}
 The deficit vector $h^\star$ Lorenz dominates every achievable deficit vector.
\end{claim}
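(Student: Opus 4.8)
The plan is to verify the defining inequalities of Lorenz dominance directly: fixing an arbitrary achievable deficit vector $h$, I would show that for every $k \in \{1,\ldots,L\}$ the sum of the $k$ largest coordinates of $h^\star$ is no larger than the sum of the $k$ largest coordinates of $h$, i.e. $\sum_{i=1}^{k} h^\star_{[i]} \le \sum_{i=1}^{k} h_{[i]}$. The structure produced by the MDR algorithm makes this tractable. By Claim \ref{cl:orderld} the vector $h^\star$ is already sorted in non-increasing order when its coordinates are listed block by block along $D_1,\ldots,D_{k^\star}$, with every coordinate in $D_j$ equal to the common value $c_j := \big(w(\overline{D}_j)-w(\overline{D}_{j-1})\big)/|D_j|$ and $c_1 \ge c_2 \ge \cdots \ge c_{k^\star}$. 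Writing $\ell_j := |\overline{D}_j|$ (with $\ell_0 = 0$), the partial sums of $h^\star$ at the block boundaries equal $w(\overline{D}_j)$ by the binding property (\ref{eq:corebind}), since the $\ell_j$ largest coordinates of $h^\star$ are precisely those indexed by $\overline{D}_j$.

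The first ingredient I would establish is a boundary inequality valid for every $m$:
$$\sum_{i=1}^{\ell_m} h_{[i]} \ \ge\ h(\overline{D}_m) \ \ge\ w(\overline{D}_m) \ =\ \sum_{i=1}^{\ell_m} h^\star_{[i]},$$
where the first step holds because the left side is the sum of the $\ell_m$ largest coordinates of $h$ and hence is at least $h$ summed over the particular set $\overline{D}_m$; the second step is achievability of $h$ via Proposition \ref{prop:megiddo} (the relaxed core constraint $h(\overline{D}_m)\ge w(\overline{D}_m)$); and the equality is (\ref{eq:corebind}). This already settles every $k$ that lands on a block boundary, $k = \ell_m$.

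The remaining work — and the only delicate point — is to interpolate strictly inside a block, i.e. to handle $k$ with $\ell_{j-1} < k < \ell_j$, where all coordinates of $h^\star$ of rank $\ell_{j-1}+1,\ldots,\ell_j$ equal $c_j$. Here I would split on the position of $h_{[k]}$ relative to $c_j$. If $h_{[k]} \ge c_j$, then $h_{[i]} \ge c_j = h^\star_{[i]}$ for every $\ell_{j-1} < i \le k$; summing these inequalities and adding the boundary inequality at $m = j-1$ gives the bound. If instead $h_{[k]} < c_j$, then monotonicity of the sorted $h$ forces $h_{[i]} \le c_j = h^\star_{[i]}$ for every $k < i \le \ell_j$, so that $\sum_{i=k+1}^{\ell_j} h_{[i]} \le \sum_{i=k+1}^{\ell_j} h^\star_{[i]}$, and subtracting this from the boundary inequality at $m = j$ yields the bound. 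Since $h$ was an arbitrary achievable vector, this proves $h^\star \lord h$ for every achievable $h$.

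I expect the within-block interpolation to be the crux: the boundary inequalities are immediate from achievability and the binding property, but controlling a partial sum that cuts through a block requires exploiting that $h^\star$ is flat at the block average $c_j$ together with the monotonicity of the sorted entries of $h$. This is precisely the continuous counterpart of the single-crossing argument used in Proposition \ref{prop:intld}, and it is cleaner here because $h^\star$ takes the exact average $c_j$ rather than its floor and ceiling — so the averaging fact of \citet{HLP29} (Claim \ref{cl:hlp}) is what ultimately underlies the comparison. An alternative route would be to invoke Fact \ref{fact:hlp} and compare $\sum_{k} g(h^\star_k)$ with $\sum_{k} g(h_k)$ for all continuous increasing convex $g$, but the direct partial-sum argument seems more transparent and avoids a separate optimization over $g$.
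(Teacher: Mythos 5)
Your proof is correct: the boundary inequalities at block ends follow from achievability and the binding property (\ref{eq:corebind}) exactly as you state, and both branches of your within-block case split are valid --- if $h_{[k]} \ge c_j$ you sum $h_{[i]} \ge c_j$ over ranks $\ell_{j-1} < i \le k$ and add the boundary inequality at $\overline{D}_{j-1}$; if $h_{[k]} < c_j$, sorted monotonicity gives $h_{[i]} \le c_j$ for $k < i \le \ell_j$ and you subtract from the boundary inequality at $\overline{D}_j$. Where you genuinely depart from the paper is in how the cut through a block is handled. The paper has no case split: it writes $\sum_{i=1}^k h^\star_{[i]}$ as the convex combination $\frac{1}{|D_j|}\bigl[(|D_j|-k')\,w(\overline{D}_{j-1}) + k'\,w(\overline{D}_j)\bigr]$, bounds both terms simultaneously by achievability of $h$, and then invokes the averaging fact of Claim \ref{cl:hlp} applied to $h$ restricted to $D_j$, namely $k'\,h(D_j)/|D_j| \le \sum_{i=1}^{k'} h^{D_j}_{[i]}$, before comparing with the global top-$k$ sum of $h$. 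Your single-crossing argument dispenses with Claim \ref{cl:hlp} entirely and uses only the monotonicity of the sorted $h$, so it is marginally more elementary; it also unifies this claim with Case B of Proposition \ref{prop:intld}, which is literally your argument with the floor/ceiling pair in place of the exact average $c_j$ --- a connection you correctly anticipate. What the paper's route buys in exchange is a single uninterrupted chain of inequalities, with the convexity of the combination doing the interpolation that your case analysis does by hand, and it isolates in Claim \ref{cl:hlp} the only place where the internal distribution of $h$ within a block matters. One point worth making explicit in your write-up: the identification $\sum_{i=1}^{\ell_m} h^\star_{[i]} = w(\overline{D}_m)$ rests on Claim \ref{cl:orderld}, and when $c_j = c_{j+1}$ the set of ``$\ell_m$ largest'' coordinates of $h^\star$ is ambiguous, but the partial sum is not, so nothing breaks.
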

\begin{proof}
    Let $h$ be an arbitrary achievable deficit vector. Pick $k \in \{1,\ldots,L\}$. Suppose the $k$-th highest deficit in $h^\star$ lies in $D_j$. Let $k'$ be the number of components in $D_j$ and rest $(k-k')$ components be in $\overline{D}_{j-1}$. By definition
    \begin{align}
        \sum_{i=1}^k h^\star_{[i]} &= w(\overline{D}_{j-1}) + k'\frac{w(\overline{D}_j) - w(\overline{D}_{j-1})}{|D_j|} \nonumber \\
        &= \frac{1}{|D_j|}\Big[ (|D_j|- k')w(\overline{D}_{j-1}) + k' w(\overline{D}_j)\Big] \nonumber \\
        &\le \frac{1}{|D_j|}\Big[ (|D_j|- k')h(\overline{D}_{j-1}) + k' h(\overline{D}_j)\Big] \nonumber \\
        &= h(\overline{D}_{j-1}) + \frac{k'}{|D_j|} \Big( h(\overline{D}_j) - h(\overline{D}_{j-1}) \Big) \nonumber \\
        &= h(\overline{D}_{j-1}) + k'\frac{h(D_j)}{|D_j|}, \label{eq:con1}
    \end{align}
    where the first equality is by definition of $h^\star$, the inequality is by achievability of $h$ and the fact that $k' \le |D_j|$, and the final equality in (\ref{eq:con1}) is due to the fact that $h$ is linear (implying $h(D_j) = h(\overline{D}_j) - h(\overline{D}_{j-1})$).

    Suppose there are $\ell$ deficit schools in $D_j$. Denote the ordered deficits of $h$ in $D_j$ as $h^{D_j}_{[1]} \ge \ldots \ge h^{D_j}_{[\ell]}$. By Claim \ref{cl:hlp}, we see that for any $k' \in \{1,\ldots,\ell\}$,
    \begin{align*}
        k'\frac{h(D_j)}{|D_j|} \le \sum_{i=1}^{k'} h^{D_j}_{[i]}
    \end{align*}
    Using this with inequality (\ref{eq:con1}), we get 
    \begin{align*}
        \sum_{i=1}^k h^\star_{[i]} \le h(\overline{D}_{j-1}) + \sum_{i=1}^{k'} h^{D_j}_{[i]} \le \sum_{i=1}^k h_{[i]}.
    \end{align*}
    This establishes the desired claim.
\end{proof}

These claims establish the achievability and Lorenz dominance of $h^\star$.
\end{proof}

\end{document}